\documentclass[11pt]{article}

\usepackage[margin=1in]{geometry}
\usepackage{amsmath, amssymb, amsthm}
\usepackage{mathtools}
\usepackage{mathptmx} 
\usepackage{booktabs}
\usepackage{array}
\usepackage{natbib}
\usepackage[colorlinks=true,allcolors=black]{hyperref}
\usepackage{setspace}
\usepackage{tikz}
\usetikzlibrary{positioning, arrows.meta, calc}
\usepackage{graphicx}
\usepackage{subcaption}
\usepackage{enumitem}
\usepackage{float}
\newfloat{algorithm}{ht}{loa}
\floatname{algorithm}{Algorithm}
\usepackage{comment}
\usepackage{xcolor}   
\usepackage[normalem]{ulem}   
\makeatletter
\DeclareFontFamily{OMS}{rsfs}{\skewchar\font=48}
\DeclareFontShape{OMS}{rsfs}{m}{n}{<-> zptmcm7y}{}
\makeatother

\newcommand{\E}{\mathbb{E}}
\newcommand{\Prb}{\mathrm{Pr}}
\newcommand{\indep}{\perp\!\!\!\perp}
\DeclareMathOperator*{\Var}{Var}
\DeclareMathOperator*{\Cov}{Cov}

\theoremstyle{plain}
\newtheorem{lemma}{Lemma}

\theoremstyle{remark}

\title{Algorithm or Creative? Disentangling Algorithmic Divergent Delivery from User Ad Preferences in Digital Platforms\thanks{We thank Dan Wang and participants at the Junior Faculty Meet at RRMB, Cornell Tech, AMA 2025, and ISMS Marketing Science 2025 for feedback on earlier versions of this paper.}}
\author{Pallavi Pal\thanks{School of Business, Stevens Institute of Technology. Email: \texttt{ppal2@stevens.edu}.} \and Anjana Susarla\thanks{Eli Broad College of Business, Michigan State University. Email: \texttt{asusarla@broad.msu.edu}.}}
\date{\today}

\begin{document}

\maketitle

\begin{abstract}
\noindent Online advertising platforms host hundreds of thousands of A/B tests, but the platform's delivery algorithm routes each creative to the audience it predicts will engage. Every two-arm test therefore conflates the creative's effect with the algorithm's targeting response, and adjusting for the realized audience is biased because audience is a post-treatment mediator. We propose a three-arm design that adds an arm exposing the algorithm to the treatment metadata while holding the user-facing creative identical to control, separately point-identifying the algorithmic delivery and ad creative effects. In a live Meta campaign with a women-targeted text fragment, the algorithmic delivery channel raises female impression share by $+2.27$ ppt, while the visible creative leaves delivery composition statistically unchanged. A conventional two-arm test would attribute the entire delivery shift to the creative. The design isolates the contribution of platform's algorithm to the outcome which is separable from creative content.

\medskip
\noindent\textit{Keywords:} A/B testing; ad delivery; algorithmic targeting; causal mediation; divergent delivery; algorithmic fairness.

\end{abstract}

\newpage
\section{Introduction}
\label{sec:intro}

Digital platforms once functioned as a passive ``funnel'' through which advertisers could reach a chosen audience. 
Under this model, ``performance'' was viewed almost exclusively as an intrinsic property of the ad creative itself. This no longer fits modern platforms; an algorithm now actively filters user types for each ad creative. The emergence of AI and machine learning has turned this filtering process into a ``bigger black box.'' Consequently, the utility derived from a campaign depends jointly on two distinct channels: \textit{the Ad Creative Effect (ACE) and the Algorithm Filtering Effect (AFE).}

Isolating a clean causal effect of a creative in the presence of platform-mediated delivery is a known measurement problem: \citet{johnson2017ghost} address a related version --- the counterfactual-audience problem --- with the Ghost Ads design, which recovers the effect of exposure without altering the delivery algorithm. Our paper solves a different but complementary problem: separating the algorithm's contribution from the creative's. We introduce a new experimental design that empirically identifies these two effects through a general three-arm experimental design. 
To our knowledge, this is the first paper to provide an econometric method that causally identifies the effect of AI algorithm filtering separately from the creative within a single campaign --- a first step in understanding the AI's black-box output. We validate the methodology in two ways: a synthetic data generator that simulates the bidding and ad matching system of online ad platforms, and a field experiment on a live Facebook campaign. In both datasets we find a statistically significant effect of the algorithm filtering.

This filtering is not hypothetical. \citet{lambrecht2019algorithmic} systematically documented on Facebook that a gender-neutral STEM career ad is delivered disproportionately to men --- a canonical instance of how the algorithm's filtering shapes exposure even when the advertiser's intent is neutral. (Our own campaign reproduces both signatures: a stable female impression deficit and a female CPM premium that widens with visibility; Appendix~\ref{app:further-empirical}.) A subsequent literature has extended this to broader protected categories and platforms \allowbreak{\citep{ali2019discrimination, burtch2025characterizing}} and termed the phenomenon \emph{divergent delivery}.\footnote{{ On} Meta's platform alone, \citet{burtch2025characterizing} document roughly $182{,}000$ active A/B tests within their analysis window.} Because it operates silently inside the platform, every two-arm A/B test returns a number that blends the two effects. In the short run, the \textit{joint effect} of both the mechanisms provided by conventional two-arm A/B tests remains a useful metric. For a manager focused solely on immediate click-through rates or conversions, this single, blended number reflects the bottom-line reality of their current campaign performance. However, in the longer run, disentangling these effects can be useful for strategic long term creative decisions, regulatory policies, and platform's future direction for the algorithm: advertisers can correctly attribute performance to creative versus targeting; platforms can audit and tune their delivery algorithms; and regulators can scrutinize algorithmic exposure in protected categories. 



We show that the advertiser does not need privileged access to the platform's internals to break this blend; the platform's own A/B-testing tool is enough. Our design adds one arm in which the algorithm receives the treatment signal in the campaign metadata while the user-facing creative is held identical to the control. The contrast between this arm and the standard arms isolates the algorithm's filtering response and leaves the residual as the clean causal effect of the creative. Formally, we prove that this estimator is unbiased for the two effects and estimates the causal impact.  We model the algorithm as a mediator on the path from creative to outcome, and under three transparent assumptions, the method point-identifies the Algorithm Filtering Effect (AFE) and the Ad Creative Effect (ACE).\footnote{The three assumptions are  random arm assignment, excludability, and a match between the realized mediator distributions in the new arm and the treatment arms, verifiable in the campaign's initial window, before engagement feedback can alter delivery. Furthermore, we can identify these effects separately  without the cross-world sequential-ignorability assumption that observational mediation analysis requires.}

We also formalize a simulation method to create synthetic data using a game-theoretic model that replicates the auction bidding systems on these ad platforms; the model incorporates both the matching and the bidding system. We introduce noise in the way the actual matching quality is read by the ad platform, through noise introduced in the quality score from the click rate; we assume this noise is higher for female users. We set up an experiment and estimate the resulting divergent delivery. We find that noise in the quality score leads to delivery divergence picked up by our method; because it is a simulation, we can identify that the ad platform's score noise leads to delivery divergence originating in the match stage, where the ad platform chooses relevant ads for each user.

We then apply the method to a live Meta traffic-optimization campaign in which the treatment is the presence of a women-targeted text fragment. In the field experiment the algorithmic channel raises the female impression share by $+2.27$ percentage points while the visible creative leaves delivery composition statistically unchanged --- and a disaggregated decomposition shows the algorithm tilting delivery toward women most strongly within the 35--44 band, while the 65+ band's share of total impressions falls through a composition shift. The data shows that the delivery divergence on one demographic variable bleeds into other variables as well.

We think this framework will be useful in connecting the measurement problem to the algorithmic-fairness debate: the algorithm effect we identify is, by construction, the platform's contribution to differential exposure, separable from any choice the advertiser makes about creative content. The design generalizes to any delivery outcome (demographic share, geographic share, conversion mix) on any platform whose targeting metadata is separable from rendered content. We provide what is to our knowledge the first causal --- not descriptive --- decomposition of divergent delivery, in the sense that the delivery channel is point-identified separately from the creative channel within a single campaign.

\section{Literature review}
\label{sec:lit}

Our paper sits at the intersection of four literatures: research on algorithm matching and recommendations on platforms, the divergent-delivery literature in advertising that documents the measurement problem, the causal-mediation tradition that supplies our identification strategy, and the algorithmic-fairness literature that motivates the application.

As platforms have scaled, they have shifted from passive conduits into two-sided markets in which a machine-learning algorithm mediates the match between the two sides of the market, and a substantial literature shows that such curation systematically reshapes which products and users meet, and therefore the outcomes both sides realize \citep{fleder2009blockbuster, hosanagar2014village, peukert2024editor}. The closest work in this stream treats the algorithm as a determinant of platform outcomes; we extend it by treating the algorithm as a causal mediator on the path from advertiser content to user response, and designing an experiment that separates the algorithm's effect from the creative's.

In advertising, the visible symptom of algorithmic intermediation is \emph{divergent delivery}: ostensibly identical campaigns reach systematically different audiences even when targeting, budget, and placement are held fixed \citep{lambrecht2019algorithmic, ali2019discrimination, burtch2025characterizing}. The field's central measurement instruments are the platform's randomized holdout (lift) test \citep{gordon2019comparison} and, most commonly, the creative A/B test. \citet{braun2025where} document the A/B test's central weakness: the delivery algorithm serves each ad to a distinct, optimized mix of users even during the test, which can distort the magnitude and even the sign of the estimated contrast. They diagnose the problem but do not separate the two effects. Our paper suggests a method that proposes a decomposition --- to identify the divergence. The dominant methodological response to platform-mediated delivery in the single-creative case is \citet{johnson2017ghost}'s Ghost Ads design, which recovers the causal effect of ad exposure by tracking users who would have been treated in the intervention arm but were assigned to control --- a properly matched counterfactual for the algorithm-selected audience.  Where Ghost Ads asks ``what is the causal effect of my ad, given the audience the algorithm chose,'' we ask ``when two ads are compared, how much of the outcome difference is the algorithm's audience choice versus the ads themselves.'' \allowbreak{\citet{burtch2025characterizing} measure divergent delivery at scale and propose configuration-side mitigation; we causally identify the algorithm-determined share of that disparity.} Experimental studies establish that the skew is causal, \citet{ali2019discrimination} randomize creatives while holding targeting, budget, and bid fixed, and \citet{imana2024apparent} pair ads to separate qualification-driven from platform-driven skew --- but in these designs the visible creative and the metadata change together, so the algorithm's response cannot be separated from the creative's own effect. Our paper fills that gap by proposing an additional arm that helps isolate the effect of the delivery divergence.

The closest methodological literature is on experimental identification of mediation in the presence of a post-treatment mediator. \citet{imai2013experimental} formalize parallel and encouragement designs that randomize the mediator independently of the treatment, avoiding the cross-world ignorability assumption required by observational mediation analysis. \citet{heckman2015econometric} emphasize that the credibility of any decomposition rests entirely on the source of variation in the mediator, which is what an experimental design supplies. A related methodological hazard in platform experiments is interference: \citet{holtz2024interference} document via an Airbnb meta-experiment that more than $20\%$ of an individual-randomized estimate is interference bias, recoverable through cluster randomization. We adapt the parallel-design logic of \citet{imai2013experimental} to a setting where the mediator is a platform delivery algorithm rather than a user medium, and where the experimenter cannot set the algorithm's state directly but can manipulate the metadata signal the algorithm conditions on.

Because divergent delivery falls along demographic lines, measuring it is also a fairness question: in protected categories such as housing, credit, and employment, the algorithmic channel carries regulatory exposure that the creative does not. A practical obstacle is that advertisers seldom observe protected-class membership directly; \citet{kallus2022assessing} show how auxiliary data (e.g., the U.S.\ Census) and proxies such as surname and geolocation can bound the true disparity even when the protected attribute is unobserved.  The longer-run, strategic consequences of mismeasured fairness we defer to the policy discussion of Section~\ref{sec:policy}. Our paper sits closest to this strand: we supply the causal estimate of the algorithmic channel that fairness audits currently approximate with descriptive imbalance.

Across these strands, no prior work has integrated the platform-mediation mechanism, the divergent-delivery measurement problem, the causal-mediation toolkit, and the fairness stakes into a single causal estimate. This paper builds that bridge.

\section{Methodology}
\label{sec:methodology}

\subsection{Setup and the identification problem}

For each ad, an advertiser submits a creative. The platform's delivery algorithm uses the metadata to decide \emph{which} users see the ad; those matched users see the creative content. Given a treatment $D$, an outcome $Y$, and a third variable $S$, the structural relationships are summarized by Figure~\ref{fig:framework}.

\begin{figure}[h]
\centering
\begin{tikzpicture}[>=Latex, thick,
  box/.style={draw, rounded corners, inner sep=6pt, font=\small, align=center},
  mech/.style={draw, rounded corners, inner sep=6pt, font=\small, align=center, fill=blue!7, draw=blue!60!black}]
  \node[box, text width=4.8cm] (adv) at (0, 2.4) {\textbf{Advertiser}\\[2pt] submits creative};
  \draw[dashed, rounded corners=8pt, gray!70, thick] (-5, 0.7) rectangle (5, -2.4);
  \node[font=\footnotesize\bfseries, text=gray, fill=white, inner sep=3pt, anchor=west] at (-4.7, 0.7) {PLATFORM};
  \node[mech, text width=3.4cm] (a) at (-2.6, -0.85)
    {\textbf{Ad content}\\{\footnotesize(treatment $D$)}};
  \node[mech, text width=3.4cm] (f) at (2.6, -0.85)
    {\textbf{Delivery filtering}\\{\footnotesize(mediator $S$)}};
  \draw[->] (a.east) -- node[above, font=\footnotesize\itshape] {via metadata} (f.west);
  \node[box, text width=3.4cm] (y) at (0, -4.3) {\textbf{Engagement $Y$}\\{\footnotesize click, CTR}};
  \draw[->] (adv.south) -- (a.north);
  \draw[->] (a.south) to[bend right=25] node[left=2pt, font=\footnotesize] {{creative (ACE)}} (y.west);
  \draw[->] (f.south) to[bend left=25] node[right=2pt, font=\footnotesize] {{algorithm (AFE)}} (y.east);
\end{tikzpicture}
\caption{Advertiser-user interaction on the ad platform}
\label{fig:framework}
\end{figure}

We compare two treatments, $A$ and $B$, with respect to an outcome $Y$. Although treatment is assigned at random, each subject's exposure to treatment is shaped by a delivery mechanism $S$ --- in our application, an ad delivery algorithm that determines which advertisements are served to which users. Additionally, as $S$ is itself a post-treatment variable that responds to treatment and in turn influences the outcome, it satisfies the formal definition of a mediator. More details are given in Appendix~\ref{app:methodology-details}.

Let $Y_i(a, s)$ denote the outcome of unit $i$ under treatment $a \in \{A, B\}$ and $S_i(a)$ denote the mediator value for unit $i$ realizes under $a$. The observed outcome under treatment $a$ is $Y_i(a, S_i(a))$.

A naive comparison of treatment arms estimates
\begin{equation}
\E\!\bigl[Y(A, S(A))\bigr] \;-\; \E\!\bigl[Y(B, S(B))\bigr],
\label{eq:naive}
\end{equation}
which is the \emph{total effect} of $A$ versus $B$. Equation~\eqref{eq:naive} is a well-defined causal contrast under randomization of treatment alone, but it is not the contrast a researcher typically wants when $S$ is itself altered by the treatment: it confounds the \emph{direct} effect of receiving $A$ rather than $B$ with the \emph{indirect} effect operating through changes in the mediator. The treatment shifts both the direct channel and the mediator; the total effect mixes the two channels in an unidentifiable proportion under a single-stage design. In the appendix we discuss why including covariates related to mediator S does not solve the problem.

\subsection{The proposed three-arm design}

We add a third arm in which subjects receive treatment $A$, but the mediator is intervened upon and set to a value drawn from the distribution that $S$ takes under $B$. Concretely, subjects are randomly assigned to one of three arms:
\begin{center}
\begin{tabular}{clll}
\toprule
Arm & Treatment & Mediator & Observed quantity \\
\midrule
$C$   & $A$ & $S(A)$ & $\E[Y(A, S(A))]$ \\
$T_1$ & $A$ & $S(B)$ (imposed) & $\E[Y(A, S(B))]$ \\
$T_2$ & $B$ & $S(B)$ & $\E[Y(B, S(B))]$ \\
\bottomrule
\end{tabular}
\end{center}

\noindent $T_1$ is the new ingredient: subjects assigned to it receive the substantive treatment $A$, but the mediator is fixed by the experimenter to a value drawn from the distribution of $S(B)$. This decoupling of treatment from mediator is the central design feature.

\subsection{Identification of AFE and ACE}
\label{sec:identification}

We identify two effects: the \emph{Algorithm Filtering Effect} (AFE), the effect of the algorithm's ad filtering on the outcome, and the \emph{Ad Creative Effect} (ACE), the effect of the ad creative that users see. Their sum is the total effect.

Given the three observed means above, define
\begin{align}
\mathrm{AFE} &\;=\; \E[Y(A, S(B))] - \E[Y(A, S(A))], \label{eq:afe}\\
\mathrm{ACE} &\;=\; \E[Y(B, S(B))] - \E[Y(A, S(B))], \label{eq:ace}\\
\mathrm{TE}  &\;=\; \E[Y(B, S(B))] - \E[Y(A, S(A))]. \label{eq:te}
\end{align}

AFE isolates the algorithm filtering (treatment held at $A$, $S$ shifts $S(A) \to S(B)$); ACE isolates the creative channel ($S$ held at $S(B)$, treatment shifts $A \to B$).

The decomposition
\begin{equation}
\mathrm{TE} \;=\; \mathrm{AFE} + \mathrm{ACE}
\label{eq:decomp}
\end{equation}
follows algebraically from \eqref{eq:afe}--\eqref{eq:te}.

In the three-arm design, each of the three means in~\eqref{eq:afe}--\eqref{eq:te} is identified by a simple sample mean within an arm. No cross-equation regression and no sequential ignorability assumption is required. The only assumptions invoked are:
\begin{enumerate}
\item[(A1)] \emph{Random assignment to arm.} Arm indicator $Z \in \{1,2,3\}$ is independent of all potential outcomes.
\item[(A2)] \emph{Excludability.} The arm indicator $Z$ affects $Y$ only through the realized $(D, S)$ pair specified in the table.
\item[(A3)] \emph{Successful manipulation of $S$ in $T_1$.} Conditional on $Z=2$, the realized $S$ is drawn from the marginal distribution of $S(B)$, independently of the potential outcomes $\{Y_i(a,s)\}$.
\end{enumerate}
Assumption~(A1) is delivered by the randomization. Assumption~(A2) is the standard SUTVA-style excludability assumption. Assumption~(A3) is the only design-specific assumption and is verifiable in the campaign's initial window: before engagement feedback can alter the matching rule, the visible creative has no channel to delivery, so one can compare the empirical distribution of $S$ in $T_1$ to that of $S$ in $T_2$ and test their equality by a Kolmogorov--Smirnov or randomization-based test --- within this window, a divergence would uniquely indicate manipulation failure. The distributional component of (A3) is what the comparison of $T_1$ and $T_2$ verifies; the independence component holds by construction when the platform applies the same serving rule to two disjoint, randomly assigned audiences, since the rule's draw of who is matched cannot depend on which arm's audience it is operating on.\footnote{This assumes the matching rule is static within the experimental window. If the serving rule updates on engagement feedback during the campaign, the realized mediator distributions in $T_1$ and $T_2$ can diverge, which the distributional comparison above would detect.} Let $\bar{Y}_z$ be the mean outcome in arm $z$. We now state the identification result.

\begin{lemma}[Unbiasedness of the three-arm decomposition]
\label{lem:unbiased}
Under assumptions (A1)--(A3), the estimators
\[
\widehat{\mathrm{AFE}} = \bar{Y}_2 - \bar{Y}_1,
\qquad
\widehat{\mathrm{ACE}} = \bar{Y}_3 - \bar{Y}_2,
\qquad
\widehat{\mathrm{TE}}  = \bar{Y}_3 - \bar{Y}_1
\]
are unbiased for the corresponding population parameters $\mathrm{AFE}$, $\mathrm{ACE}$, and $\mathrm{TE}$ defined in equations~\eqref{eq:afe}--\eqref{eq:te}, and the sample-level decomposition $\widehat{\mathrm{TE}} = \widehat{\mathrm{AFE}} + \widehat{\mathrm{ACE}}$ holds exactly.\footnote{See Appendix~\ref{app:proof} for proof}
\end{lemma}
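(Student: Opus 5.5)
The proof has three steps. First we show that each arm mean is unbiased for the corresponding potential-outcome mean in the table. Second we get the three contrasts by linearity of expectation. Third we check the sample-level identity by algebra.

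Write $n_z$ for the number of units in arm $z$. We condition on the realized arm sizes, treating them as fixed, or as independent of the potential outcomes under (A1). Then $\E[\bar{Y}_z] = \E[Y_i \mid Z_i = z]$ for a generic unit $i$.

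\emph{Arms $C$ and $T_2$.} For $z=1$, (A2) says the observed outcome of a unit in $C$ is $Y_i(A, S_i(A))$. By (A1), $Z_i$ is independent of the joint potential outcomes $\{Y_i(a,s), S_i(a)\}$. Hence
\[
\E[Y_i \mid Z_i=1] = \E[Y_i(A,S_i(A)) \mid Z_i=1] = \E[Y(A,S(A))].
\]
The same argument for $z=3$ gives $\E[Y(B,S(B))]$.

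\emph{Arm $T_1$.} Here the observed outcome is $Y_i(A, \tilde S_i)$, where $\tilde S_i$ is the imposed mediator. By (A3), conditional on $Z_i=2$, $\tilde S_i$ has the marginal law of $S(B)$ and is independent of $\{Y_i(a,s)\}$. Conditioning on $\tilde S_i = s$ and using independence, we will write
\[
\E[Y_i \mid Z_i=2] = \int \E[Y_i(A,s) \mid Z_i=2]\, dF_{S(B)}(s).
\]
By (A1) this equals $\int \E[Y(A,s)]\, dF_{S(B)}(s)$.

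This is the main obstacle. The parameter $\E[Y(A,S(B))]$ is a cross-world quantity, so it equals that integral only when it is read as the population mean of $Y(A,s)$ averaged over the marginal distribution of $S(B)$. I will state this interpretation explicitly; it is the one the design table and (A3) implicitly adopt. An alternative is to add the assumption that $S(B)$ is independent of $\{Y(A,s)\}_s$, under which the two readings coincide.

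Given the three arm-level identities, linearity of expectation gives
\[
\E[\widehat{\mathrm{AFE}}] = \E[\bar Y_2] - \E[\bar Y_1] = \mathrm{AFE},
\]
and likewise for $\widehat{\mathrm{ACE}}$ and $\widehat{\mathrm{TE}}$, matching \eqref{eq:afe}--\eqref{eq:te}.

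Finally, the sample decomposition follows from
\[
(\bar Y_2-\bar Y_1)+(\bar Y_3-\bar Y_2)=\bar Y_3-\bar Y_1,
\]
which holds identically for every realization and needs no assumptions. It mirrors \eqref{eq:decomp} at the population level.
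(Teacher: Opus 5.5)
Your proposal is correct and follows the paper's proof essentially step for step: (A1) for arms $C$ and $T_2$, the tower property with (A2), (A3) and (A1) for $T_1$, linearity for the three contrasts, and the telescoping identity for the exact sample decomposition. The caveat you flag is genuine, and the paper handles it only by convention: its Step~3 simply asserts $\int \E[Y(A,s)]\,\mathrm{d}F_{S(B)}(s) = \E[Y(A,S(B))]$, and its appendix on controlled versus natural direct effects confirms that $\E[Y(A,S(B))]$ is meant as exactly this marginalized quantity, so stating that reading explicitly (or adding your independence condition) is the right move.
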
 
The design's unbiasedness rests on two features. First, each effect is the difference of arm-means made comparable by physical randomization rather than statistical adjustment. Second, the enforced marginal $S(B)$ distribution in $T_1$ replaces $S(A)$, thereby isolating the algorithm effect ($S$) from the creative effect. The decomposition is therefore robust to unmeasured confounding between $S$ and $Y$, which is the principal source of bias in regression-based mediation.

\subsection{Estimation}

Let $Z_i \in \{1, 2, 3\}$ index the arm of unit $i$ and let $n_z = \sum_i \mathbf{1}\{Z_i = z\}$. The point estimators are simple group differences; valid standard errors come from the heteroskedasticity-consistent estimator
\begin{equation}
\widehat{\Var}(\bar{Y}_z - \bar{Y}_{z'}) \;=\; \frac{s_z^2}{n_z} + \frac{s_{z'}^2}{n_{z'}},
\label{eq:varhat}
\end{equation}
where $s_z^2$ is the within-arm sample variance. In a regression representation,
\begin{equation}
Y_i \;=\; \mu_1 \mathbf{1}\{Z_i = 1\} + \mu_2 \mathbf{1}\{Z_i = 2\} + \mu_3 \mathbf{1}\{Z_i = 3\} + e_i,
\label{eq:saturated}
\end{equation}
and contrasts of the $\hat{\mu}_z$ recover $\widehat{\mathrm{AFE}}$, $\widehat{\mathrm{ACE}}$ and $\widehat{\mathrm{TE}}$.\footnote{Pre-treatment covariates may be added via the interacted specification of \citet{lin2013agnostic} to improve precision without compromising unbiasedness.}

\bigskip

\section{Simulation evidence}
\label{sec:simulation}

In this section, we test our method on synthetic data generated from an auction model in which the three potential outcomes are known by construction. The simulation helps confirm numerically that the estimators of Lemma~\ref{lem:unbiased} recover the population AFE, ACE, and TE in a setting that closely resembles ad delivery. Second, it provides a theoretical reasoning of the bias, by mapping the empirical AFE magnitude onto the noise variance of the platform's predicted click-rate signal.\footnote{We assume the auction uses the bid alone. This simplification preserves the role of the algorithmic filtering channel, separable from any strategic bidding effects. The non-stylized version would include a quality-score-weighted bid. We chose this version so that the effect of algorithmic filtering can be cleanly presented.}

\subsection{Model setup}
\label{sec:sim-theory}

The simulator implements a two-stage auction: algorithmic matching by quality score, then a second-price auction on bids among matched advertisers.\footnote{Each advertiser has one ad, unless the advertiser is running an experiment.}

\paragraph{Advertisers \& users primitives:} The platform has $N_u$ users and $N_i$ advertisers. User $u$ has gender $g_u \in \{M, F\}$; advertiser $i$ has per-click value $v_i \sim F_v$ and ad type $a(i)$. Per-pair true click rates satisfy $r_{u,i} \sim \mathrm{Beta}(\alpha, \beta)$ with $\E[r] = \mu_r$. The Beta family produces the L-shape observed empirically in ad-CTR distributions (further discussion in Appendix~\ref{app:sim-details}).

 \paragraph{Stage 1 and platform's ad filtering:} In stage 1, the platform ranks advertisers for each user $u$ by the matching quality score $s_{u,i} = r_{u,i} + \varepsilon_{u,i}$, with independent noise $\varepsilon_{u,i} \sim \mathrm{N}(0, \sigma^2_{a(i), g_u})$, and retains the top $q$ fraction to enter the auction. Signal noise is heteroskedastic across (ad-type $\times$ gender) --- this is the sole channel through which the algorithm produces AFE. 
%

%
\paragraph{Stage 2 and the bidding strategy:} In stage 2, an auction is held for each user to decide the winning ad; the winner is determined using bids in a second-price sealed-bid auction (SBA). Under SBA, submitting the true per-click valuation is a weakly dominant strategy. Therefore, each bidder bids their valuations, $b_i = v_i$.

\paragraph{The ad experiment in the simulation}
Let's assume there is a focal advertiser who created an ad experiment to compare Ad A to Ad B. We assume that the bid is set at three top-tail percentiles of $F_v$,  in order to get substantial wins for the focal advertiser. We set up a two-ad experiment, Ad A and Ad B, through the three-arm design. Specifically,  the standard deviation of the matching-signal noise $\varepsilon$ is zero, except female users for Ad $B$. We assume that the other advertisers are split near equally across the two ad types, which mainly affect the quality noise across advertisers.\footnote{For more details, refer to Appendix~\ref{app:sim-details}.}

We assume control ad being Ad A with assigned score $s_{u,i}(A)$, we switch to treatment 1 (i.e., $T_1$), which is still ad A for users, meaning the click rate comes from Ad $A$ but ad filtering changes as  the platform-side signal changed from $s_{u,i}(A)$ to $s_{u,i}(B)$. This alters the signal-noise structure without disturbing the true click-rate primitive or the auction format. This arm captures AFE, which is exactly the impression-level expression of any matching-stage primitive that differs by creative. The second treatment (i.e., $T_2$) captures the complete Ad B setup, i.e., both the click rate and platform-side signal being ($B, s_{u,i}(B)$), which captures ACE.

\begin{center}
\begin{tabular}{cccl}
\toprule
Arm $z$ & $(Ad, S)$& Focal creative & Focal signal-noise $\sigma_{0, g_u}$ \\
\midrule
1 & $(A, S(A))$ & $A$ & $0$ for all $g_u$ \\
2 & $(A, S(B))$ & $A$ & $\sigma \cdot \mathbf{1}\{g_u = F\}$ \\
3 & $(B, S(B))$ & $B$ & $\sigma \cdot \mathbf{1}\{g_u = F\}$ \\
\bottomrule
\end{tabular}
\end{center}
Note only the focal's configuration shifts with the arm; non-focal advertisers retain their own incumbent $(a(i), \sigma)$ assignment, identical across arms. This is the direct simulation analogue of $T_1$ in the methodology: the creative is held at $A$ while the mediator is drawn from $S(B)$.

Under the no-creative-effect specification (the true click rate does not depend on which creative an advertiser runs), $T_1$ and $T_2$ impose the same joint law on $(r_{u, 0}, s_{u, 0})$ for the focal. Because no creative effect is built into the model, the estimator should detect none: any non-zero $\widehat{\mathrm{ACE}} = \bar{Y}_3 - \bar{Y}_2$ must come from finite-sample noise, not from a substantive direct effect. We use this as a check that the estimator does not identify a biased creative effect.\footnote{To accommodate a substantive creative effect, the click-rate distribution can be made creative-specific without changing the rest of the construction; we report the no-creative-effect specification as the cleanest base case.}

\paragraph{Simulation algorithm} 
We simulate $N_u = 10{,}000$ users per round, each with a gender $g_u \in \{M, F\}$ drawn at the U.S.\ Facebook share $\Prb(g_u = M) = 0.462$. There are $N_i = 500$ advertisers indexed $i = 0, \ldots, N-1$, of whom one (the focal advertiser, $i = 0$) corresponds to the experimenter and the remaining $N-1$ are incumbent competitors.\footnote{We report the results at $R = 1{,}000$ Monte-Carlo rounds throughout the paper.} So $\sigma^2 = 9 \times 10^{-4}$ --- about $1.8\times$ the variance of $r$ itself. For each (user, advertiser) pair we track two latent quantities, true click rate $r_{u,i}$ and matching quality score $s_{u,i}$.

The two-stage delivery mechanism mirrors a production ad-auction pipeline.\\
\emph{- Stage 1 (matching):} For each user $u$, the platform ranks all $N$ advertisers by their signal $s_{u,i}$ and retains the top $q = 0.10$ to participate in the auction. \\
\emph{- Stage 2 (auction):} among matched advertisers, the highest bid wins and pays the second-highest bid, and the winner's ad is shown to the user.\footnote{ The signal $s$ determines auction participation but does not enter the within-auction ranking, which is on bid alone. Calibration: $\mu_r = 0.02$, $\Var(r_{u,i}) = 5 \times 10^{-4}$, $F_v = \mathrm{LogNormal}(0, 0.5^2)$ }

\subsection{Simulation results}
\label{sec:sim-results}

\paragraph{The effect of quality score noise on matched distribution across gender}
Figure~\ref{fig:sim-A} reports average percentage of female matched in each arm. The focal's matched pool is $53.7\%$ female in $C$ and $69.9\%$ female in $T_1$. The gap is the result of quality-score noise, which we created to be higher in $T_1$ for the simulation. Arm $T_2$ sits at $69.9\%$ as well, since its quality-score distribution is identical to $T_1$'s.

\begin{figure}[h]
\centering
\caption{Female share of the focal advertiser's matched pool by arm}
\label{fig:sim-A}
\includegraphics[width=0.75\textwidth]{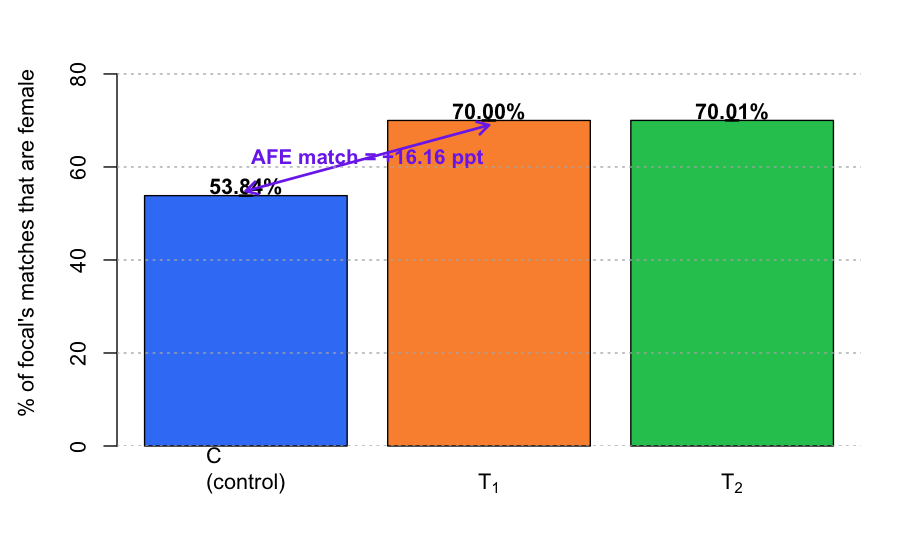}
\end{figure}

\paragraph{Two follow-up questions, both addressed in Appendix~\ref{app:bid-transfer}.}
Figure~\ref{fig:sim-A} pins down the algorithm bias at the matching stage. Two natural follow-ups are (i) how does that bias transfer to what the advertiser actually observes at the impression stage as the focal bid varies, and (ii) does the three-arm design distort the platform's overall serve rate. Appendix~\ref{app:bid-transfer} defines the match-stage and impression-stage measures and theoretically shows the correlation with bid level and how much of match-stage bias will be transferred to impression-level bias.

The outcome is $Y \equiv W_F / W$, the female share of focal impressions per round. We average the paired round-level differences (Lemma~\ref{lem:unbiased}) over $R = 1{,}000$ rounds and report 95\% CIs based on $t_{R-1,\,0.975}$. Table~\ref{tab:sim-decomp} reports the decomposition at $\sigma = 0.030$ with the focal bid at the 99.5th percentile of $F_v$.

\begin{table}[h]
\centering
\caption{Simulated decomposition on female delivery share}
\label{tab:sim-decomp}
\begin{tabular}{lccc}
\toprule
Name & Estimate (pp) & SE (pp) & 95\% CI (pp) \\
\midrule
$\widehat{\mathrm{AFE}} = \bar{Y}_2 - \bar{Y}_1$ & $+16.21$ & $0.03$ & $[+16.14,\, +16.27]$ \\
$\widehat{\mathrm{ACE}} = \bar{Y}_3 - \bar{Y}_2$ & $+0.00$  & $0.02$ & $[-0.04,\, +0.05]$ \\
$\widehat{\mathrm{TE}}  = \bar{Y}_3 - \bar{Y}_1$ & $+16.21$ & $0.03$ & $[+16.14,\, +16.28]$ \\
\bottomrule
\end{tabular}
\end{table}

The sharply positive $\widehat{\mathrm{AFE}}$ and near-zero $\widehat{\mathrm{ACE}}$ confirm Lemma~\ref{lem:unbiased} numerically.

\section{Empirical analysis}
\label{sec:empirical}

We run an ad-delivery experiment on Meta. The experiment takes a generic STEM job ad and runs it for one week. This category of ads --- STEM job ads --- is known to have gender inequality in ad delivery \citep{lambrecht2019algorithmic}. We create a treatment ad where we add a women-targeted text fragment, and create two versions, $T_1$, where the text ``Women'' is present only in the ad metadata (visible to the delivery algorithm, not the user). We are able to achieve this by adding text that is too long to be displayed in the ad window and thus is hidden in the background. Furthermore, in $T_2$, we include the text in both the ad metadata and the user-facing body, making it visible in the ad. Figure~\ref{fig:three-arms} shows the user-facing creative for each arm as it appeared to the user. 

\begin{figure}[h!]
\centering
\begin{subfigure}[t]{0.30\textwidth}
\centering
\includegraphics[width=\textwidth]{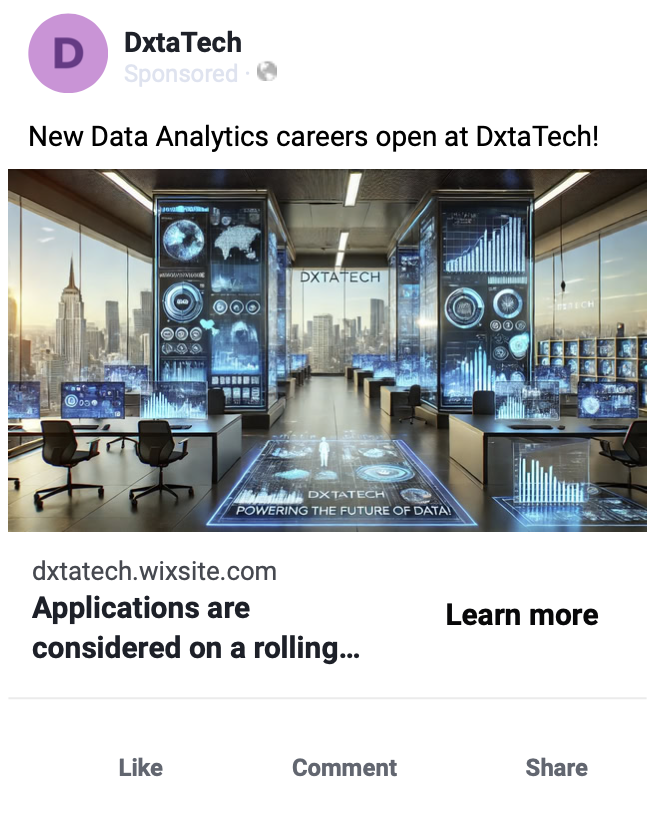}
\caption{$C$: control. Generic creative; no women-targeted text in metadata or body.}
\label{fig:arm1}
\end{subfigure}\hfill
\begin{subfigure}[t]{0.30\textwidth}
\centering
\includegraphics[width=\textwidth]{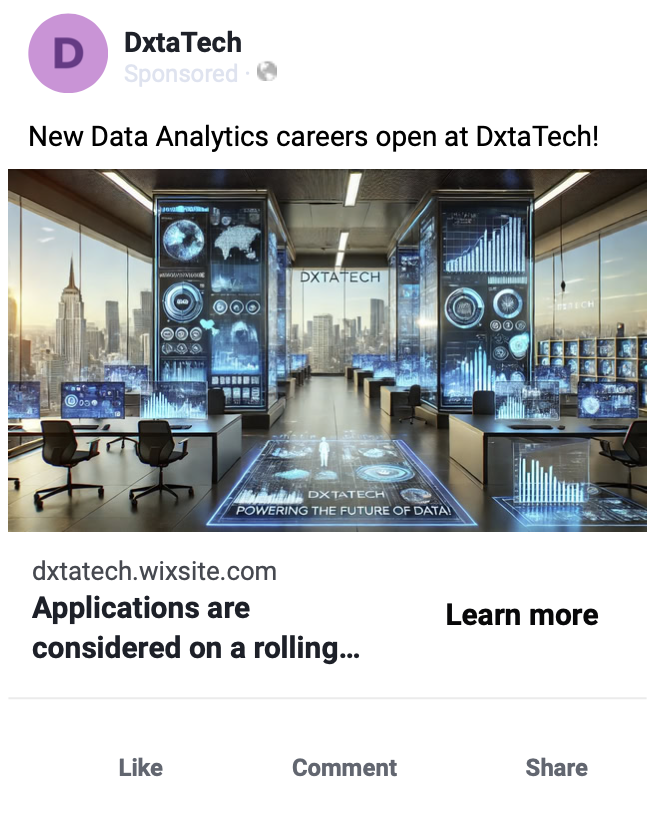}
\caption{$T_1$: metadata only. Body text is identical to $C$; women-targeted fragment appears only in the metadata visible to the algorithm.}
\label{fig:arm2}
\end{subfigure}\hfill
\begin{subfigure}[t]{0.30\textwidth}
\centering
\includegraphics[width=\textwidth]{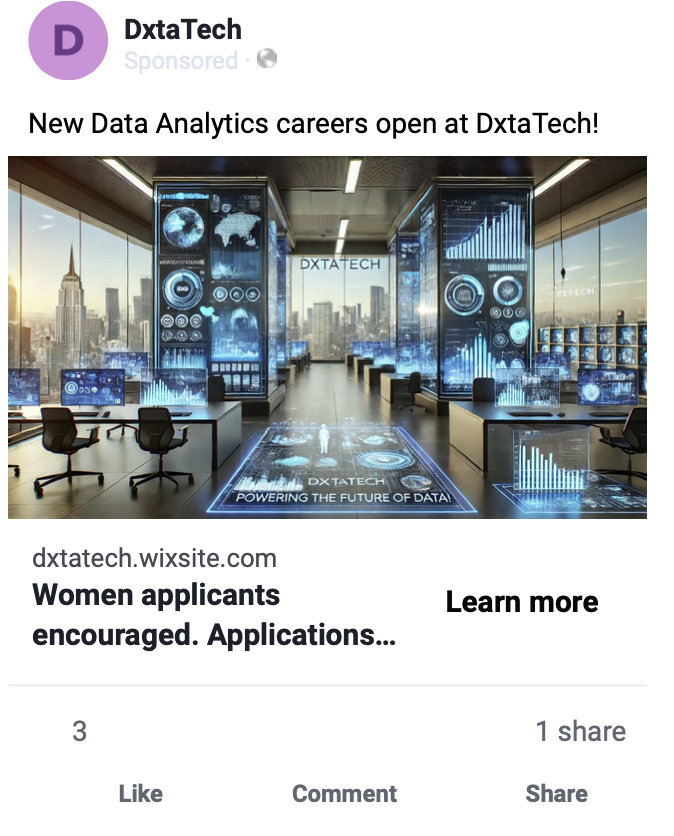}
\caption{$T_2$: visible. Women-targeted fragment in both metadata and user-facing body text.}
\label{fig:arm3}
\end{subfigure}
\caption{The three experimental arms as visible to users}
\label{fig:three-arms}
\end{figure}

\subsection{Data and design}

The data is recorded at the level of (ad creative $\times$ age band $\times$ gender $\times$ day $\times$ time-slot) cells, with impressions, link clicks, reach, and spend aggregated within each cell.
The three arms within each bid level correspond to the methodology's $(D, S)$ conditions of Section~\ref{sec:identification} as follows:

\begin{center}
\begin{tabular}{cll}
\toprule
Arm & Ad creative & $(D, S)$ condition \\
\midrule
$C$   & Generic creative, no ``wd F'' anywhere & $(A, S(A))$ \\
$T_1$ & Generic creative, ``wd F'' in metadata only & $(A, S(B))$ \\
$T_2$ & Visibly female-coded creative, ``wd F'' in both & $(B, S(B))$ \\
\bottomrule
\end{tabular}
\end{center}

The estimation is replicated at high and low bid levels; the high-bid experiment is the primary source of inference; the results from the low-bid experiment are discussed in Appendix~\ref{app:further-empirical}.

\paragraph{Features intentionally kept constant and varied across arms:} The ad creative differs across the three arms (the manipulation of interest), and the bid amount differs across the two parallel three-arm experiments (a secondary source of heterogeneity). We keep the campaign objective as traffic, and ad placement as a single Facebook news-feed surface. The ad format, budget, target audience, schedule and duration are held identical across the three arms within each bid level. The ad creative and the bid amount are the two variables intentionally varied.\footnote{More details on the setup in Appendix~\ref{app:campaign-config}}

\subsection{Data description}
\label{sec:descriptives}

We restrict the tracking to the campaign's initial week so that learning effects do not contaminate the static delivery decision; this depresses absolute click counts, so impression share is the primary outcome and CTR is reported as a secondary outcome. 

Table~\ref{tab:armlevel} reports arm-level totals at each bid. Volume is balanced across arms at the high bid (33.3--33.6k impressions, $\approx$33k unique users, CPM \$4.08--\$4.12); the low-bid arms are small due to a lower probability of a win.

\begin{table}[ht]
\centering
\small
\caption{Arm-level descriptives.}
\label{tab:armlevel}
\begin{tabular}{llrrrrrr}
\toprule
Bid & Arm & Impressions & Reach & Clicks & CTR (\%) & CPM (\$) & Imp/User \\
\midrule
High & $C$   & 33{,}325 & 33{,}042 & 23 & 0.069 & 4.075 & 1.009 \\
High & $T_1$ & 33{,}277 & 32{,}971 & 32 & 0.096 & 4.115 & 1.009 \\
High & $T_2$ & 33{,}554 & 33{,}337 & 29 & 0.086 & 4.080 & 1.007 \\
\midrule
Low  & $C$   &      485 &      485 &  0 & 0.000 & 1.876 & 1.000 \\
Low  & $T_1$ &      421 &      420 &  0 & 0.000 & 1.956 & 1.002 \\
Low  & $T_2$ &      698 &      695 &  0 & 0.000 & 1.945 & 1.004 \\
\bottomrule
\end{tabular}
\end{table}

Moving from $C$ to $T_1$ raises the female share of impressions (Table~\ref{tab:delivery-high}) from 34.8\% to 36.9\% --- a $+2.07$ pp shift produced by the metadata-only signal, with delivery volume held essentially constant. The reallocation is not uniform across age; Section~\ref{sec:empirical-age-gender} quantifies this heterogeneity. The visible-content arm ($T_2$) is where engagement responds: female $\Delta$CTR widens from $0.000$ pp in $C$ to $+0.084$ pp in $T_2$. In the initial analysis it seems that the algorithm's response to the ``wd F'' metadata is to reallocate working-age impressions toward women.

\begin{figure}[h!]
\centering
\includegraphics[width=0.78\textwidth]{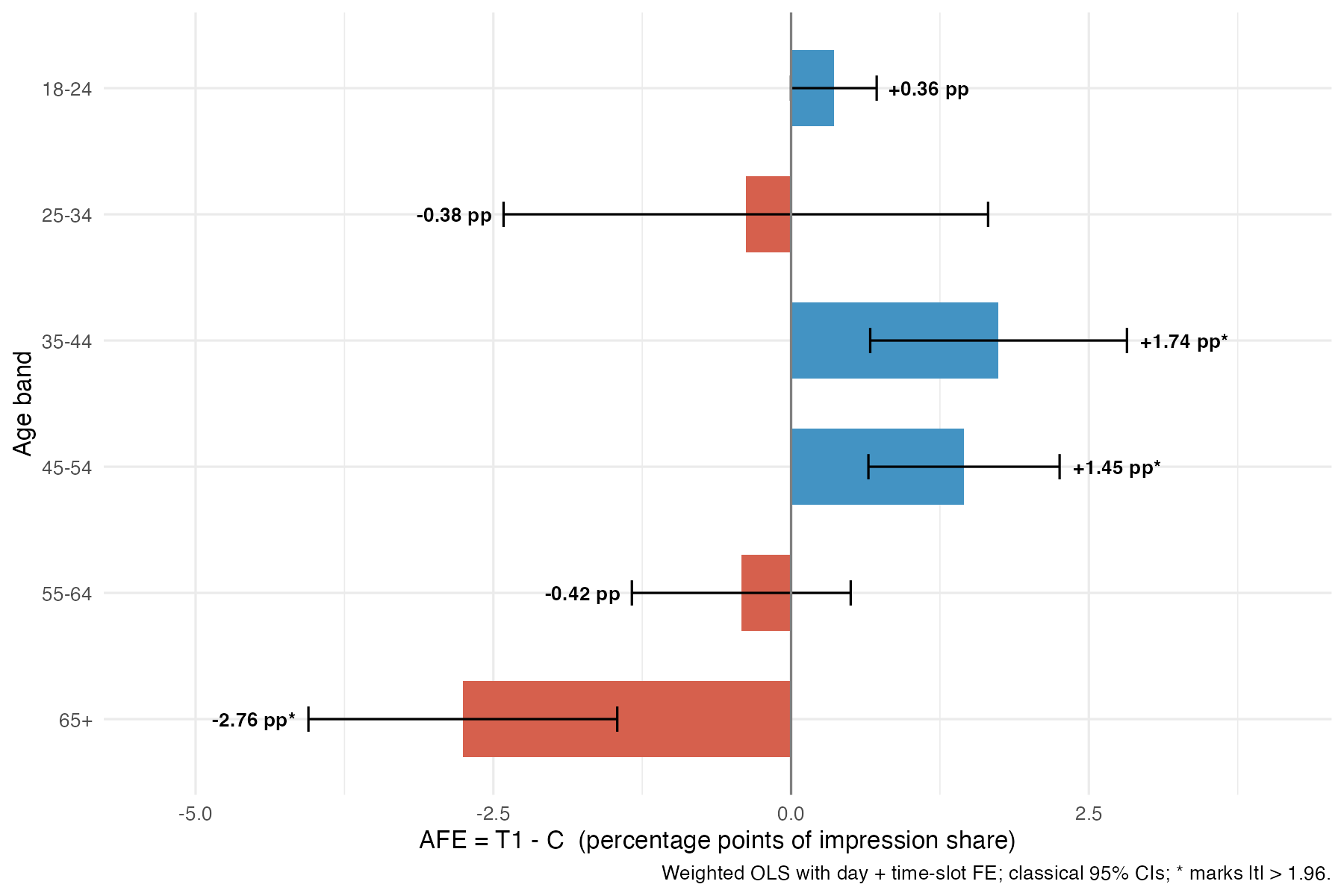}
\caption{$T_1 - C$ shift in age-band impression share, gender pooled (high bid). 95\% CIs; $*$ marks $|t|>1.96$.}
\label{fig:age-AFE}
\end{figure}

\subsection{Empirical strategy}
\label{sec:inference}

We begin by estimating the two effects through the following regression:
\begin{equation}
y_{z,d,t,a,g} \;=\; \alpha + \beta_1 \mathbf{1}\{z = T_1\} + \beta_2 \mathbf{1}\{z = T_2\} + \lambda_a + \gamma_d + \delta_t + \varepsilon_{z,d,t,a,g},
\label{eq:main-spec}
\end{equation}
where the outcome $y$ is an indicator that the impressions were delivered to women, for each aggregate observation at the arm-day-time-slot-age-gender level, weighted by aggregated impressions; and $\lambda_a$, $\gamma_d$, $\delta_t$ are age, day, and time-slot fixed effects.\footnote{Standard errors are clustered at the (day $\times$ time-slot) level (CR2, $G = 19$),
the level at which delivery-auction shocks are shared. The method's minimum detectable effect (80\% power, 5\% two-sided) is $\approx 2.2$~pp for the delivery outcome --- the observed AFE of $2.27$~pp sits just above this threshold --- and $\approx 0.07$~pp for CTR against a $0.08\%$ base, which is why the engagement decomposition in Appendix~\ref{app:ctr} is reported as descriptive rather than a powered test.} The time and day controls absorb common delivery shocks; for instance, afternoon delivery has a female share roughly $8$ percentage points higher than morning in the raw data (about $3.5$~pp conditional on the other controls), and the age controls hold the age composition fixed.

Contrasts recovered from the single regression~\eqref{eq:main-spec} are $\widehat{\mathrm{AFE}} = \hat{\beta}_1$, $\widehat{\mathrm{TE}} = \hat{\beta}_2$, and $\widehat{\mathrm{ACE}} = \hat{\beta}_2 - \hat{\beta}_1$.\footnote{We use the covariance-corrected variance $\Var(\hat{\beta}_1) + \Var(\hat{\beta}_2) - 2\Cov(\hat{\beta}_1, \hat{\beta}_2)$ for the difference. Appendix~\ref{app:wildboot} reports wild cluster bootstrap Romano-Wolf adjusted $p$-values on these three contrasts, with Rademacher weights drawn at the (day $\times$ time-slot) cluster level, matching the clustering of the analytic standard errors.}

\subsection{Results}
\label{sec:empirical-delivery}
Table~\ref{tab:delivery-high} reports the three-arm decomposition of female impression share.

\begin{table}[h]
\centering
\caption{Decomposition of female impression share, high-bid experiment.}
\label{tab:delivery-high}
\begin{tabular}{lrrr}
\toprule
Quantity & $C$ & $T_1$ & $T_2$ \\
\midrule
Female impressions & 11{,}595 & 12{,}267 & 12{,}141 \\
Total impressions  & 33{,}325 & 33{,}277 & 33{,}554 \\
Female share (\%)  & 34.79   & 36.86   & 36.18   \\
\midrule
$\widehat{\mathrm{AFE}}$ (algorithm channel) & & $+2.27$ \,(0.77) & \\
$\widehat{\mathrm{ACE}}$ (visible content)   & &                 & $-0.93$ \,(0.56) \\
$\widehat{\mathrm{TE}}$  (total)             & &                 & $+1.33$ \,(1.01) \\
\bottomrule
\end{tabular}
\par\smallskip
\footnotesize\emph{Note.} Standard errors in parentheses.
\end{table}

The algorithm channel tilts delivery toward women by $+2.27$ percentage points within age bands. The visible-content contrast is $-0.93$ pp (SE $0.56$), statistically indistinguishable from zero. This comparison also serves as the manipulation check described in Section~\ref{sec:identification}: at this early stage of the campaign, delivery in $T_1$ and $T_2$ should coincide, and neither the level contrast ($p = 0.095$) nor its per-age decomposition (joint $p = 0.332$, Section~\ref{sec:empirical-age-gender}) rejects that they do. For scale, the $+2.27$~pp tilt is a $6.5\%$ relative increase on the control arm's $34.8\%$ female share, closing roughly $7$--$8\%$ of the male--female delivery gap --- the expected order of magnitude for a metadata-only nudge holding creative, budget, and placement fixed. The practical conclusion is about attribution: a two-arm audit observes only the total effect ($+1.33$~pp) and would attribute the entire delivery shift to the creative; separating the channels shows the shift is algorithmic. This distinction matters for advertisers deciding whether to revise the creative, and for platforms accountable for the delivery outcome.

A robustness table reporting the same decomposition under alternative fixed-effects specifications is provided in Appendix~\ref{app:further-empirical}.

\subsection{Heterogeneity of the algorithm channel}
\label{sec:empirical-age-gender}

The aggregate decomposition in Section~\ref{sec:empirical-delivery} averages the within-age tilt across age bands. We test whether the two channels operate uniformly across viewer age by interacting the arm indicators with the age bands in equation~\eqref{eq:main-spec} --- the same panel, weights, and clustering as the main specification:
\[
y_{z,d,t,a,g} = \lambda_a + \beta_1 T_{1,z} + \beta_2 T_{2,z} + \sum_{a \neq a_0} \theta_{1,a}\, T_{1,z} \mathbf{1}\{a\} + \sum_{a \neq a_0} \theta_{2,a}\, T_{2,z} \mathbf{1}\{a\} + \gamma_d + \delta_t + \varepsilon.
\]
The per-age treatment effects trace how strongly the algorithm tilts the gender mix within each band.

Table~\ref{tab:age-decomp} reports the implied per-age effects; Figure~\ref{fig:within-age} plots them.

\begin{table}[h]
\centering
\caption{Per-age within-age treatment effects on the female impression share, high-bid experiment.}
\label{tab:age-decomp}
\begin{tabular}{lrrr}
\toprule
Age band & AFE (algorithm) & ACE (visible content) & TE (total) \\
\midrule
18--24 & $-0.84$ \,(2.60) & $+1.39$ \,(2.85) & $+0.55$ \,(2.49) \\
25--34 & $+2.81$ \,(2.71) & $-1.55$ \,(1.26) & $+1.26$ \,(2.69) \\
35--44 & $+3.86$ \,(1.96) & $-1.42$ \,(1.61) & $+2.44$ \,(2.56) \\
45--54 & $+1.38$ \,(1.08) & $-0.78$ \,(0.79) & $+0.60$ \,(1.02) \\
55--64 & $+2.89$ \,(0.84) & $-1.32$ \,(1.17) & $+1.56$ \,(0.98) \\
65+    & $+0.23$ \,(1.16) & $+0.37$ \,(1.41) & $+0.60$ \,(1.28) \\
\bottomrule
\end{tabular}
\par\smallskip
\footnotesize\emph{Note.} Effects in percentage points on the within-age female share. Standard errors in parentheses.
\end{table}

The algorithm's within-age tilt toward women is concentrated in the working-age segment, specifically the 35--44 and 55--64 bands. Jointly, the algorithm channel is heterogeneous across age bands: the test of $\theta_{1,a} = 0$ for all $a$ gives $p = 0.002$. This signifies that when algorithmic delivery diverges around one demographic variable, it is heterogeneously applied across other demographic variables, making it a complex system to disentangle. The creative channel's within-age response is not: the joint test of the interaction differences $\theta_{2,a} - \theta_{1,a} = 0$ (the ACE heterogeneity null, since $\mathrm{ACE} = \beta_2 - \beta_1$) gives $F(5, 18) = 1.24$, $p = 0.332$, and total-effect heterogeneity is likewise absent ($F(5, 18) = 0.96$, $p = 0.468$). The asymmetry is informative: the algorithm \emph{targets}, its response to the metadata signal depends on viewer age, while the creative channel shows no detectable within-age response.

\begin{figure}[ht]
\centering
\includegraphics[width=0.85\textwidth]{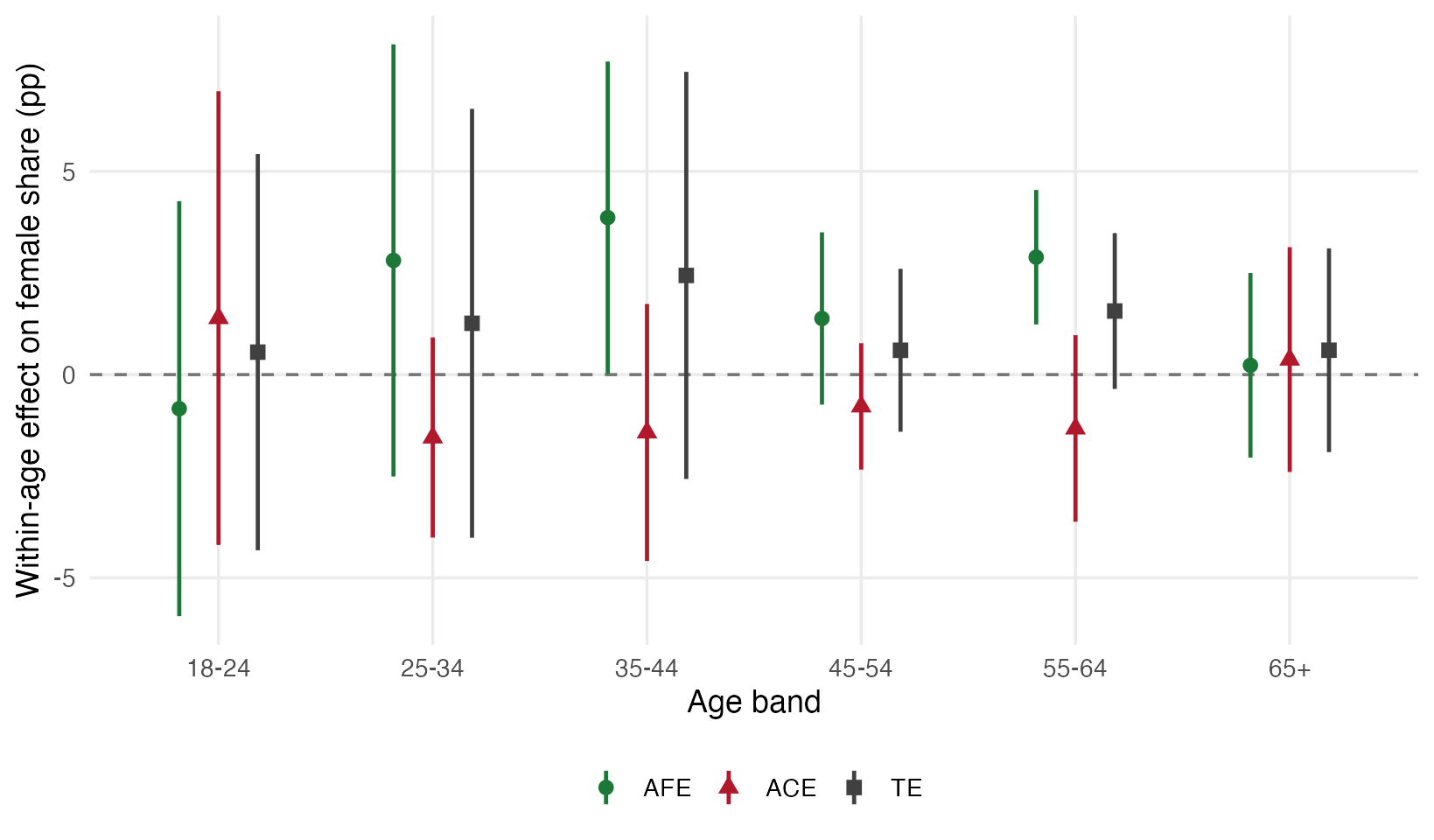}
\caption{Within-age decomposition of the female impression share by age band, high bid. Points are per-age AFE, ACE, and TE from equation~\eqref{eq:main-spec} with full (arm $\times$ age) interactions.}
\label{fig:within-age}
\end{figure}

A further disaggregation of the delivery decomposition to the twelve individual (age $\times$ gender) cells is reported in Appendix~\ref{app:further-empirical}.

\section{Implications for platforms and marketers}
\label{sec:policy}

The decomposition has consequences for three audiences. For \emph{advertisers}, a two-arm A/B test does not recover the creative's intrinsic effect; it returns the sum of the creative effect and the algorithm's targeting response. A firm that wants to know which creative is genuinely more persuasive --- independent of whom the algorithm chose to show it to --- needs the metadata-only arm, which costs one configuration field and the budget for one additional cell. The marginal cost of recovering the correct interpretation is, in other words, small.

For \emph{platforms and regulators}, the indirect effect the design isolates is, by construction, the platform's own contribution to differential exposure, separable from any choice the advertiser makes about content. This converts divergent delivery from a descriptive imbalance into a causal quantity that can be audited. Combined with proxy-based methods for inferring protected-class membership when it is unobserved \citep{kallus2022assessing}, the three-arm design supplies an audit instrument deployable even where demographic data are restricted. Three feasibility conditions apply: the auditor needs an advertiser-controlled metadata field, platform-reported demographic delivery breakdowns, and permission to run multi-arm tests. In the protected verticals that motivate the audit case --- housing, credit, employment --- platforms restrict exactly these inputs under special ad category rules, so there the design functions as a platform-internal audit tool or one compelled through regulatory data access, with proxy-based demographic inference \citep{kallus2022assessing} as the outside auditor's fallback.

Finally, the attribution carries a \emph{dynamic} risk. \citet{shimao2025strategic} argue that static fairness measures miss strategic feedback: subjects who are systematically under-served disengage, and the next generation of models trains on the resulting biased behavior. If a platform or advertiser misattributes an algorithm-driven gap to the creative, it applies the wrong remedy --- changing content while leaving the targeting mechanism untouched --- and can entrench a self-reinforcing cycle of discrimination. Our finding that the detectable reallocation is entirely algorithmic implies that the exposure disparity originates in a channel that content-side remedies do not reach, which is the channel a fairness audit must evaluate.

\section{Conclusion}
\label{sec:conclusion}

The method separates the Algorithm Filtering Effect (AFE) from the Ad Creative Effect (ACE). In the field experiment the algorithmic channel shifts female delivery share by $+2.27$ percentage points, while the creative channel is statistically indistinguishable from zero on delivery; a conventional two-arm contrast, which observes only the total, would attribute the entire shift to the wrong channel.

Theoretically, the paper extends causal-mediation design to a mediator that is an external technological artifact --- a delivery algorithm --- rather than the latent behavioral state assumed in the classical mediation literature. The substantive assumption the design rests on is verifiable from the experiment's own data within the initial window in which the serving policy is static, which is what allows the decomposition to avoid the cross-world ignorability that observational mediation requires.

Several directions follow for future work. The design assumes the algorithm's serving policy is approximately stable within the analysis window; longer panels would allow a direct test of how engagement feedback updates the serving policy over time, which we do not run here. Extending the decomposition from impression delivery to conversions, revenue, or welfare outcomes requires larger samples and longer horizons. Modeling bid pacing and auction competition structurally, rather than holding them fixed by design, would connect the decomposition to equilibrium responses. Finally, because the delivery effect sits near the design's minimum detectable effect in this campaign, a larger deployment is the natural next step to sharpen the estimates the identification strategy delivers.
\newpage
\bibliography{references}

@article{ali2019discrimination,
  author  = {Ali, Muhammad and Sapie{\.z}y{\'n}ski, Piotr and Bogen, Miranda and Korolova, Aleksandra and Mislove, Alan and Rieke, Aaron},
  title   = {Discrimination Through Optimization: How {Facebook}'s Ad Delivery Can Lead to Biased Outcomes},
  journal = {Proceedings of the ACM on Human-Computer Interaction},
  volume  = {3},
  number  = {CSCW},
  pages   = {1--30},
  year    = {2019},
  note    = {Article 199},
}

@article{aronow2017estimating,
  author  = {Aronow, Peter M. and Samii, Cyrus},
  title   = {Estimating Average Causal Effects Under General Interference, with Application to a Social Network Experiment},
  journal = {The Annals of Applied Statistics},
  volume  = {11},
  number  = {4},
  pages   = {1912--1947},
  year    = {2017},
}

@article{braun2025where,
  author  = {Braun, Michael and Schwartz, Eric M.},
  title   = {Where {A/B} Testing Goes Wrong: How Divergent Delivery Affects What Online Experiments Cannot (and Can) Tell You About How Customers Respond to Advertising},
  journal = {Journal of Marketing},
  volume  = {89},
  number  = {2},
  pages   = {71--95},
  year    = {2025},
}

@misc{burtch2025characterizing,
  author        = {Burtch, Gordon and Moakler, Robert and Gordon, Brett R. and Zhang, Poppy and Hill, Shawndra},
  title         = {Characterizing and Minimizing Divergent Delivery in {Meta} Advertising Experiments},
  year          = {2025},
  eprint        = {2508.21251},
  archivePrefix = {arXiv},
  primaryClass  = {econ.EM},
  note          = {Working paper}
}

@article{fleder2009blockbuster,
  author  = {Fleder, Daniel and Hosanagar, Kartik},
  title   = {Blockbuster Culture's Next Rise or Fall: The Impact of Recommender Systems on Sales Diversity},
  journal = {Management Science},
  volume  = {55},
  number  = {5},
  pages   = {697--712},
  year    = {2009},
}

@article{gordon2019comparison,
  author  = {Gordon, Brett R. and Zettelmeyer, Florian and Bhargava, Neha and Chapsky, Dan},
  title   = {A Comparison of Approaches to Advertising Measurement: Evidence from Big Field Experiments at {Facebook}},
  journal = {Marketing Science},
  volume  = {38},
  number  = {2},
  pages   = {193--225},
  year    = {2019},
}

@article{heckman2015econometric,
  author  = {Heckman, James J. and Pinto, Rodrigo},
  title   = {Econometric Mediation Analyses: Identifying the Sources of Treatment Effects from Experimentally Estimated Production Technologies with Unmeasured and Mismeasured Inputs},
  journal = {Econometric Reviews},
  volume  = {34},
  number  = {1--2},
  pages   = {6--31},
  year    = {2015},
}

@article{holtz2024interference,
  author  = {Holtz, David and Lobel, Felipe and Lobel, Ruben and Liskovich, Inessa and Aral, Sinan},
  title   = {Reducing Interference Bias in Online Marketplace Experiments Using Cluster Randomization: Evidence from a Pricing Meta-experiment on {Airbnb}},
  journal = {Management Science},
  volume  = {71},
  number  = {1},
  pages   = {390--406},
  year    = {2025},
}

@article{hosanagar2014village,
  author  = {Hosanagar, Kartik and Fleder, Daniel and Lee, Dokyun and Buja, Andreas},
  title   = {Will the Global Village Fracture into Tribes? {Recommender} Systems and Their Effects on Consumer Fragmentation},
  journal = {Management Science},
  volume  = {60},
  number  = {4},
  pages   = {805--823},
  year    = {2014},
}

@article{imai2013experimental,
  author  = {Imai, Kosuke and Tingley, Dustin and Yamamoto, Teppei},
  title   = {Experimental Designs for Identifying Causal Mechanisms},
  journal = {Journal of the Royal Statistical Society: Series A (Statistics in Society)},
  volume  = {176},
  number  = {1},
  pages   = {5--51},
  year    = {2013},
}

@inproceedings{imana2024apparent,
  author    = {Imana, Basileal and Korolova, Aleksandra and Heidemann, John},
  title     = {Auditing for Discrimination in Algorithms Delivering Job Ads},
  booktitle = {Proceedings of the Web Conference 2021 (WWW '21)},
  pages     = {3767--3778},
  year      = {2021},
  publisher = {ACM},
  address   = {Ljubljana, Slovenia},
}

@article{johnson2017ghost,
  author  = {Johnson, Garrett A. and Lewis, Randall A. and Nubbemeyer, Elmar I.},
  title   = {Ghost Ads: Improving the Economics of Measuring Online Ad Effectiveness},
  journal = {Journal of Marketing Research},
  volume  = {54},
  number  = {6},
  pages   = {867--884},
  year    = {2017},
}

@article{kallus2022assessing,
  author  = {Kallus, Nathan and Mao, Xiaojie and Zhou, Angela},
  title   = {Assessing Algorithmic Fairness with Unobserved Protected Class Using Data Combination},
  journal = {Management Science},
  volume  = {68},
  number  = {3},
  pages   = {1959--1981},
  year    = {2022},
}

@article{kim2025nonparametric,
  author  = {Kim, Dongwoo and Pal, Pallavi},
  title   = {Nonparametric Estimation of Sponsored Search Auctions and Impact of Ad Quality on Search Revenue},
  journal = {Management Science},
  volume  = {71},
  number  = {12},
  pages   = {10047--10066},
  year    = {2025},
}

@article{lambrecht2019algorithmic,
  author  = {Lambrecht, Anja and Tucker, Catherine},
  title   = {Algorithmic Bias? {An} Empirical Study of Apparent Gender-Based Discrimination in the Display of {STEM} Career Ads},
  journal = {Management Science},
  volume  = {65},
  number  = {7},
  pages   = {2966--2981},
  year    = {2019},
}

@article{lin2013agnostic,
  author  = {Lin, Winston},
  title   = {Agnostic Notes on Regression Adjustments to Experimental Data: Reexamining {Freedman}'s Critique},
  journal = {The Annals of Applied Statistics},
  volume  = {7},
  number  = {1},
  pages   = {295--318},
  year    = {2013},
}

@inproceedings{pearl2001direct,
  author    = {Pearl, Judea},
  title     = {Direct and Indirect Effects},
  booktitle = {Proceedings of the Seventeenth Conference on Uncertainty in Artificial Intelligence (UAI)},
  pages     = {411--420},
  year      = {2001},
  publisher = {Morgan Kaufmann},
  address   = {San Francisco, CA}
}

@article{peukert2024editor,
  author  = {Peukert, Christian and Sen, Ananya and Claussen, J{\"o}rg},
  title   = {The Editor and the Algorithm: Recommendation Technology in Online News},
  journal = {Management Science},
  volume  = {70},
  number  = {9},
  pages   = {5816--5831},
  year    = {2024},
}

@article{shimao2025strategic,
  author  = {Shimao, Hajime and Khern-Am-Nuai, Warut and Kannan, Karthik and Cohen, Maxime C.},
  title   = {Strategic Best-Response Fairness Framework for Fair Machine Learning},
  journal = {Information Systems Research},
  volume  = {36},
  number  = {4},
  pages   = {2391--2403},
  year    = {2025},
}

@book{vanderweele2015explanation,
  author    = {VanderWeele, Tyler J.},
  title     = {Explanation in Causal Inference: Methods for Mediation and Interaction},
  publisher = {Oxford University Press},
  address   = {New York},
  year      = {2015},
  isbn      = {978-0-19-932587-0}
}

@article{lewis2015unfavorable,
  author  = {Lewis, Randall A. and Rao, Justin M.},
  title   = {The Unfavorable Economics of Measuring the Returns to Advertising},
  journal = {The Quarterly Journal of Economics},
  volume  = {130},
  number  = {4},
  pages   = {1941--1973},
  year    = {2015},
}

\newpage
\appendix

\begin{center}
{\Large\bfseries Online Appendix}
\end{center}

\section{Proof of Lemma~\ref{lem:unbiased}}
\label{app:proof}

Let $Y_i(d, s)$ denote the potential outcome of unit $i$ under treatment $d \in \{A, B\}$ and mediator value $s$, and let $S_i(d)$ denote the potential mediator under treatment $d$. The observed outcome for unit $i$ in each arm is
\[
Y_i \;=\;
\begin{cases}
Y_i\!\bigl(A,\, S_i(A)\bigr) & \text{if } Z_i = 1, \\[2pt]
Y_i\!\bigl(A,\, S_i^{(2)}\bigr) & \text{if } Z_i = 2, \\[2pt]
Y_i\!\bigl(B,\, S_i(B)\bigr) & \text{if } Z_i = 3,
\end{cases}
\]
where $S_i^{(2)}$ is the realized mediator in $T_1$, which by assumption~(A3) follows the marginal distribution $F_{S(B)}$.

\paragraph{Step 1.} By construction $Y_i = Y_i\!\bigl(A, S_i(A)\bigr)$ whenever $Z_i = 1$. Therefore
\begin{align*}
\E[\bar{Y}_1]
  &= \E\!\left[\, Y_i \,\big|\, Z_i = 1 \,\right] \\
  &= \E\!\left[\, Y_i\!\bigl(A, S_i(A)\bigr) \,\big|\, Z_i = 1 \,\right] \\
  &= \E\!\left[\, Y_i\!\bigl(A, S_i(A)\bigr) \,\right] \tag{by (A1)} \\
  &= \E\!\bigl[\, Y(A, S(A)) \,\bigr].
\end{align*}

\paragraph{Step 2.} The same argument with $A$ replaced by $B$ gives
\[
\E[\bar{Y}_3] \;=\; \E\!\left[\, Y_i\!\bigl(B, S_i(B)\bigr) \,\big|\, Z_i = 3 \,\right] \;=\; \E\!\bigl[\, Y(B, S(B)) \,\bigr].
\]

\paragraph{Step 3.} Conditional on $Z_i = 2$ and $S_i^{(2)} = s$, the observed outcome is $Y_i(A, s)$ by design (the realized treatment is $A$ and the realized mediator is $s$). By the tower property of conditional expectation,
\begin{align*}
\E[\bar{Y}_2]
  &= \E\!\left[\, Y_i \,\big|\, Z_i = 2 \,\right] \\
  &= \E_{s \sim F_{S(B)}}\!\left[\, \E\!\left[\, Y_i \,\big|\, Z_i = 2,\, S_i^{(2)} = s \,\right] \,\right] \tag{tower property} \\
  &= \E_{s \sim F_{S(B)}}\!\left[\, \E\!\left[\, Y_i(A, s) \,\big|\, Z_i = 2,\, S_i^{(2)} = s \,\right] \,\right] \tag{by (A2)} \\
  &= \E_{s \sim F_{S(B)}}\!\left[\, \E\!\left[\, Y_i(A, s) \,\right] \,\right] \tag{by (A1) and (A3), independence} \\
  &= \int \E[Y(A, s)] \,\mathrm{d} F_{S(B)}(s) \tag{by (A3)} \\
  &= \E\!\bigl[\, Y(A, S(B)) \,\bigr].
\end{align*}
Each identification assumption is invoked where tagged: (A2) to replace the observed $Y_i$ given $(Z = 2, S = s)$ with the potential outcome $Y_i(A, s)$; (A1) to drop the conditioning on $Z$; and (A3) both to drop the conditioning on the realized mediator (its independence component) and to identify the marginal distribution of $S$ in $T_1$ with $F_{S(B)}$ (its distributional component).

\paragraph{Step 4.} Linearity of expectation gives
\begin{align*}
\E\!\left[\widehat{\mathrm{AFE}}\right] &= \E[\bar{Y}_2] - \E[\bar{Y}_1] = \E[Y(A, S(B))] - \E[Y(A, S(A))] = \mathrm{AFE}, \\
\E\!\left[\widehat{\mathrm{ACE}}\right] &= \E[\bar{Y}_3] - \E[\bar{Y}_2] = \E[Y(B, S(B))] - \E[Y(A, S(B))] = \mathrm{ACE}, \\
\E\!\left[\widehat{\mathrm{TE}}\right]  &= \E[\bar{Y}_3] - \E[\bar{Y}_1] = \E[Y(B, S(B))] - \E[Y(A, S(A))] = \mathrm{TE},
\end{align*}
which establishes unbiasedness of each contrast.

\paragraph{Step 5.} By inspection,
\[
\widehat{\mathrm{AFE}} + \widehat{\mathrm{ACE}} \;=\; (\bar{Y}_2 - \bar{Y}_1) + (\bar{Y}_3 - \bar{Y}_2) \;=\; \bar{Y}_3 - \bar{Y}_1 \;=\; \widehat{\mathrm{TE}},
\]
so the decomposition holds exactly in every realized sample, not merely in expectation. This completes the proof. \qed

\section{CTR decomposition and per-cell engagement analysis}
\label{app:ctr}

This appendix reports the decomposition of click-through rate (CTR), the conventional engagement outcome in ad-delivery experiments, paralleling the impression-delivery analysis in the main empirical section. CTR is defined as link clicks per impression. Because click events are rare in this campaign (mean CTR $\approx 0.08\%$), the engagement-based decomposition is substantially less well-powered than the delivery-based decomposition. We report it for completeness, and use the per-cell analysis to identify any (age $\times$ gender) cells in which the engagement contrasts are individually significant.

\subsection{Aggregate CTR decomposition}
\label{app:ctr-aggregate}

Table~\ref{tab:ctr-high-app} reports the decomposition of click-through rate at high bid under robust SE with wild-bootstrap Romano-Wolf adjusted p-values.

\begin{table}[h]
\centering
\caption{Decomposition of click-through rate, high-bid experiment. HC1 robust SEs in parentheses; $p_{\mathrm{adj}}$ are wild-bootstrap Romano-Wolf step-down adjusted across the family of three CTR contrasts (B = 5{,}000).}
\label{tab:ctr-high-app}
\begin{tabular}{lrrrr}
\toprule
Quantity & $C$ & $T_1$ & $T_2$ & $p_{\mathrm{adj}}$ \\
\midrule
CTR & $0.069\%$ & $0.096\%$ & $0.086\%$ & \\
\midrule
$\widehat{\mathrm{AFE}}$ on CTR (pp) &  & $+0.028$ \,(0.024) & & $0.532$ \\
$\widehat{\mathrm{ACE}}$ on CTR (pp) &  &  & $-0.010$ \,(0.021) & $0.734$ \\
$\widehat{\mathrm{TE}}$  on CTR (pp) &  &  & $+0.017$ \,(0.025) & $0.734$ \\
\bottomrule
\end{tabular}
\end{table}

The signs are consistent with the delivery decomposition reported in the main text: the algorithmic targeting toward women is associated with higher CTR ($\widehat{\mathrm{AFE}} = +0.028$ pp, $t = +1.28$), while the user-visible content slightly reduces CTR relative to the metadata-only condition ($\widehat{\mathrm{ACE}} = -0.010$ pp, $t = -0.48$). Neither contrast is statistically significant at the conventional threshold given the sample size ($p_{\mathrm{adj}} \geq 0.58$ for all three). The total effect is similarly small and insignificant ($+0.017$ pp, $t = +0.76$). This power gap is the principal reason we treat impression delivery, not engagement, as the primary outcome family of the analysis.

\subsection{Composition versus within-cell decomposition of the CTR effect}
\label{app:ctr-composition}

The $\widehat{\mathrm{AFE}}$ on CTR can be further decomposed into the share attributable to the algorithmic mix-shift across (age $\times$ gender) cells (the \emph{composition effect}) and the share attributable to CTR differences within those cells (the \emph{within-cell effect}):
\begin{align}
\widehat{\mathrm{AFE}}_{\mathrm{CTR}}
  &= \underbrace{\sum_g \bigl(w_{g,2} - w_{g,1}\bigr) \cdot c_{g,1}}_{\text{composition}}
   + \underbrace{\sum_g w_{g,2} \cdot \bigl(c_{g,2} - c_{g,1}\bigr)}_{\text{within-cell}},
\label{eq:comp-decomp-app}
\end{align}
where $w_{g,z}$ is the share of arm-$z$ impressions in cell $g$ and $c_{g,z}$ is the CTR of arm $z$ within cell $g$.

In the high-bid data, the composition effect accounts for approximately $1\%$ of the $\widehat{\mathrm{AFE}}$ on CTR; the within-cell effect accounts for approximately $99\%$. The algorithm's visible shift in demographic mix is therefore not the channel through which the indirect effect on CTR operates; instead, the algorithm is selecting different \emph{individuals within each demographic cell}, and those individuals respond differently to the ad. This finding indicates that demographic audits based on (age $\times$ gender) categories materially underestimate the true scale of algorithmic targeting, in line with the embedding-based imbalance findings of \citet{burtch2025characterizing}. The implication for measurement is that the impression-delivery decomposition --- which is fully observable on the demographic dimensions --- is the right primary outcome, with engagement decomposition reported as a secondary outcome whose interpretation is moderated by the existence of within-cell selection.

\subsection{Per-cell CTR decomposition by age $\times$ gender}
\label{app:ctr-cells}

Table~\ref{tab:ctr-cell-app} reports the CTR decomposition for each of the twelve (age $\times$ gender) cells at high bid. CTRs are expressed in percent; contrasts $\widehat{\mathrm{AFE}}$, $\widehat{\mathrm{ACE}}$, $\widehat{\mathrm{TE}}$ in percentage points. Standard errors are Bernoulli at the impression level, and $t$-statistics are reported in parentheses. Cells with $|t| > 1.96$ are bolded.

\begin{table}[h]
\centering
\caption{Per-cell decomposition of CTR by age $\times$ gender, high bid. Cells with $|t| > 1.96$ in bold.}
\label{tab:ctr-cell-app}
\footnotesize
\begin{tabular}{llrrrrrr}
\toprule
Age & Gender & CTR$_1$\,(\%) & CTR$_2$\,(\%) & CTR$_3$\,(\%) & $\widehat{\mathrm{AFE}}$ ($t$) & $\widehat{\mathrm{ACE}}$ ($t$) & $\widehat{\mathrm{TE}}$ ($t$) \\
\midrule
18--24 & female & 0.239 & 0.000 & 0.229 & $-0.239$ ($-1.00$) & $+0.229$ ($+1.00$) & $-0.009$ ($-0.03$) \\
18--24 & male   & 0.000 & 0.180 & 0.000 & $+0.180$ ($+1.00$) & $-0.180$ ($-1.00$) & $+0.000$ ($-$) \\
25--34 & female & 0.041 & 0.114 & 0.170 & $+0.074$ ($+0.95$) & $+0.056$ ($+0.52$) & $+0.129$ ($+1.37$) \\
25--34 & male   & 0.061 & 0.195 & 0.091 & $+0.134$ ($+1.82$) & $-0.104$ ($-1.31$) & $+0.030$ ($+0.52$) \\
35--44 & female & 0.042 & 0.104 & 0.077 & $+0.062$ ($+0.85$) & $-0.027$ ($-0.33$) & $+0.035$ ($+0.52$) \\
35--44 & male   & 0.058 & 0.057 & 0.080 & $-0.001$ ($-0.01$) & $+0.023$ ($+0.44$) & $+0.022$ ($+0.43$) \\
45--54 & female & 0.058 & 0.153 & 0.109 & $+0.096$ ($+0.90$) & $-0.044$ ($-0.38$) & $+0.051$ ($+0.53$) \\
45--54 & male   & 0.057 & 0.160 & 0.055 & $+0.103$ ($+1.34$) & $-0.105$ ($-1.38$) & $-0.002$ ($-0.03$) \\
55--64 & female & 0.211 & 0.050 & 0.143 & $-0.161$ ($-1.38$) & $+0.093$ ($+0.97$) & $-0.068$ ($-0.51$) \\
55--64 & male   & 0.079 & 0.000 & 0.050 & $-0.079$ ($-1.73$) & $+0.050$ ($+1.41$) & $-0.028$ ($-0.49$) \\
65+    & female & 0.000 & 0.129 & 0.177 & $+0.129$ ($+1.73$) & $+0.048$ ($+0.44$) & $\mathbf{+0.177}$ $\mathbf{(+2.24)}$ \\
65+    & male   & 0.106 & 0.000 & 0.000 & $\mathbf{-0.106}$ $\mathbf{(-2.00)}$ & $+0.000$ ($-$) & $\mathbf{-0.106}$ $\mathbf{(-2.00)}$ \\
\bottomrule
\end{tabular}
\end{table}

Of the 36 cell-level contrasts, three are individually significant at the $5\%$ threshold ($|t| > 1.96$): the $\widehat{\mathrm{AFE}}$ in the 65+ male cell ($-0.106$ pp, $t = -2.00$), the $\widehat{\mathrm{TE}}$ in the 65+ female cell ($+0.177$ pp, $t = +2.24$), and the $\widehat{\mathrm{TE}}$ in the 65+ male cell ($-0.106$ pp, $t = -2.00$). Both genders in the 65+ band exhibit individually significant contrasts, while no cell in the 18--54 range crosses conventional significance. Contrasts reaching $|t| > 1.65$ are $\widehat{\mathrm{AFE}}$ on 25--34 male ($+0.134$, $t = +1.82$), $\widehat{\mathrm{AFE}}$ on 55--64 male ($-0.079$, $t = -1.73$), and $\widehat{\mathrm{AFE}}$ on 65+ female ($+0.129$, $t = +1.73$); $\widehat{\mathrm{TE}}$ on 25--34 female falls marginally below the threshold ($+0.129$, $t = +1.37$).

Three substantive observations follow. First, given $3$ significant cells out of $36$ tests at the $5\%$ threshold, the cell-level CTR analysis is broadly consistent with what one would expect from chance alone: a uniform-null benchmark predicts $\sim 1.8$ significant cells. We therefore do not interpret individual cell-level CTR contrasts causally without correction for multiple testing. Second, the cells in which engagement does cross significance are concentrated in the 65+ band. This is the same band that exhibited the largest \emph{delivery} reallocation in Table~\ref{tab:cell-decomp-app}, suggesting that where the algorithm's targeting move is most pronounced, the engagement consequences are also detectable. Third, the per-cell $\widehat{\mathrm{ACE}}$ on CTR is uniformly insignificant, providing no evidence that the user-visible creative produces a within-cell engagement response that differs systematically from the metadata-only condition --- consistent with the aggregate finding that engagement effects of the visible content are difficult to detect at this sample size.

A Bonferroni correction at the family level ($36$ tests, $\alpha = 0.05$) yields a per-test threshold of $|t| > 3.06$. None of the cell-level CTR contrasts crosses this threshold. We therefore conclude that the cell-level engagement decomposition does not by itself identify any (age $\times$ gender) cell as causally affected by the treatment after multiple-testing correction, and that engagement-based audits of algorithmic differential treatment at the scale of this experiment would require either a substantially larger sample or pooling across cells along the lines of Section~\ref{sec:empirical-delivery}.

\subsection{Simulation analogue: realized vs predicted CTR}
\label{app:ctr-sim}

This subsection ports the engagement-side analysis of the simulation in Section~\ref{sec:simulation} to provide a structural reading of the small empirical CTR effects reported in Appendix~\ref{app:ctr-aggregate}. The simulation records two engagement quantities per focal impression: the \emph{realized} click rate $\hat{\mu}_r = (1/W) \sum_{u: W(u) = 0} r_{u, 0}$ (the true click probability averaged over focal's wins) and the \emph{predicted} click rate $\hat{\mu}_s = (1/W) \sum_{u: W(u) = 0} s_{u, 0}$ (the platform's signal averaged over focal's wins). At the calibrated $\sigma = 0.030$ with the focal bid at the 99.5th percentile of $F_v$, the per-arm averages across $R = 1{,}000$ Monte-Carlo rounds are reported in Table~\ref{tab:sim-ctr}.

\begin{table}[h]
\centering
\caption{Simulated realized and predicted CTR over focal impressions, by arm.}
\label{tab:sim-ctr}
\begin{tabular}{lrrr}
\toprule
Quantity & $C$ & $T_1$ & $T_2$ \\
\midrule
Realised CTR $\hat{\mu}_r$         & $0.0721$ & $0.0515$ & $0.0515$ \\
Predicted CTR $\hat{\mu}_s$        & $0.0721$ & $0.0736$ & $0.0736$ \\
Predicted $-$ realized gap         & $0.0000$ & $+0.0221$ & $+0.0221$ \\
\bottomrule
\end{tabular}
\par\smallskip
\footnotesize\emph{Note.} Click rates in raw proportions; differences are discussed in the text in percentage points.

\end{table}

Under the clean signal regime ($C$) predicted and realized CTRs coincide exactly, since $s = r$. Under the noisy regime ($T_1$ and $T_2$) the platform's predicted CTR rises slightly while the realized CTR drops by approximately $29\%$ relative to $C$, a classical winner's-curse pattern in which selecting impressions on a noisy signal pulls in low-$r$ users whose signals were inflated by chance. The gap between predicted and realized CTR, about $0.022$, is the magnitude of the prediction error a naive analyst would make by reading $\hat{\mu}_s$ off the platform's reporting interface and trusting it as realized performance.

Table~\ref{tab:sim-ctr-decomp} reports the corresponding AFE, ACE, and TE on both realized and predicted CTR.

\begin{table}[h]
\centering
\caption{Simulated AFE, ACE, and TE on realized and predicted CTR, mean (SE) with 95\% CI.}
\label{tab:sim-ctr-decomp}
\begin{tabular}{lccc}
\toprule
Names & Estimate & SE & 95\% CI \\
\midrule
$\widehat{\mathrm{AFE}}$ on $\mu_r$ & $-0.0206$ & $0.0000$ & $[-0.0207,\, -0.0206]$ \\
$\widehat{\mathrm{ACE}}$ on $\mu_r$ & $-0.0000$ & $0.0000$ & $[-0.0001,\, +0.0000]$ \\
$\widehat{\mathrm{TE}}$  on $\mu_r$ & $-0.0206$ & $0.0000$ & $[-0.0207,\, -0.0206]$ \\
\midrule
$\widehat{\mathrm{AFE}}$ on $\mu_s$ & $+0.0015$ & $0.0000$ & $[+0.0014,\, +0.0015]$ \\
$\widehat{\mathrm{ACE}}$ on $\mu_s$ & $+0.0000$ & $0.0000$ & $[-0.0000,\, +0.0001]$ \\
$\widehat{\mathrm{TE}}$  on $\mu_s$ & $+0.0015$ & $0.0000$ & $[+0.0015,\, +0.0015]$ \\
\bottomrule
\end{tabular}
\par\smallskip
\footnotesize\emph{Note.} Click rates in raw proportions; differences are discussed in the text in percentage points.

\end{table}

In the simulation the indirect channel cuts realized CTR by about $-2.1$ percentage points and raises predicted CTR by about $+0.2$ percentage points. The empirical CTR decomposition of Appendix~\ref{app:ctr-aggregate} finds a small, statistically insignificant AFE on CTR ($+0.028$ pp, $p_{\mathrm{adj}} = 0.576$). The simulation supplies a candidate explanation. At the signal-noise level implied by the empirical delivery AFE,\footnote{A back-of-envelope calibration: scaling $\sigma = 0.030$ by the ratio of empirical to simulated delivery AFE, $2.27/16.2$, gives $\sigma \approx 0.004$.} the matching channel's engagement-side bite is much smaller than its delivery-side bite, and finite-sample variability easily overwhelms it. The simulation under no-creative-effect predicts a \emph{negative} sign on the realized-CTR AFE (the winner's-curse direction), while the empirical estimate is small and positive; the discrepancy is within both the empirical CI and the range of additional primitives --- a positive creative effect on $r$, or non-additive signal noise --- that the simulation could accommodate. We report the simulated engagement decomposition as a reading aid for the empirical pattern of Appendix~\ref{app:ctr-aggregate}, not as a competing estimate.

\clearpage

%
%
%

\section{Further details on the simulation}
\label{app:sim-details}

This appendix collects supporting material for the simulation of
Section~\ref{sec:simulation}. Section~\ref{app:sim-primitives} lays out the
population and click-rate primitives in full. Section~\ref{app:sim-bids}
explains the choice of bid levels for the focal advertiser.
Section~\ref{app:sim-split} explains the split of non-focal advertisers
across the two ad types and its role in generating heteroskedastic signal
noise.

\subsection{Population, click-rate, and value primitives}
\label{app:sim-primitives}

The platform has $N_i$ advertisers and $N_u$ users. Each advertiser $i$ has
a per-click value $v_i$ drawn i.i.d.\ from a distribution $F_v$, and each
user $u \in \{1, 2, \dots, N_u\}$ has a gender $g_u \in \{M, F\}$. User
preferences for advertisers are captured through \emph{click rates} --- the
probability that user $u$ would click on advertiser $i$'s ad. True click
rates are $r_{u,i} \in [0,1]$, drawn i.i.d.\ across $(u, i)$ from a common
Beta distribution:
\[
r_{u,i} \;\sim\; \mathrm{Beta}(\alpha, \beta), \qquad \E[r] = \mu_r.
\]

\paragraph{Why a Beta distribution.}
The Beta family delivers an L-shape: most users have low click probability
with a fat right tail. This matches the empirical shape of digital-ad CTR
distributions documented in \citet{kim2025nonparametric}, and is the reason
we use it in place of, say, a truncated Gaussian.

\paragraph{Advertiser expected return.}
An advertiser $i$'s expected return from serving user $u$ is
\[
\mathrm{ER}_{u,i} \;=\; v_i \cdot r_{u,i}.
\]
Bids are set at $b_i = v_i$ (truth-telling under second-price sealed-bid
auctions) so the auction ranks advertisers by valuation, not expected
return; the expected return enters the outcome only through the realized
click.

\paragraph{Ad-type notation.}
Each advertiser $i$ has an ad type $a(i)$. The ad type interacts with the
user's gender to determine the variance of the platform's matching-signal
noise, $\sigma^2_{a(i), g_u}$; this is what makes the algorithm-side signal
heteroskedastic across (ad-type $\times$ gender) cells and is the sole
channel through which the algorithm produces AFE.

\subsection{Choice of bid levels for the focal advertiser}
\label{app:sim-bids}

The focal advertiser's bid is swept across three top-tail percentiles of
$F_v$: the 95th, 98th, and 99.5th. Quartile bids (25th, 50th, 75th
percentile) yield essentially zero wins in the simulation and therefore
zero identification of the impression-stage AFE --- the focal ad needs to be
in the upper tail of $F_v$ to ever win the second-price auction against
the $N_i - 1$ incumbents.

The three chosen bid levels span roughly $80$ to $780$ expected wins per
simulation round, which is enough for tight Monte-Carlo error on the
impression-stage effect at every bid tier and enough to trace out the
relationship between win rate and how much of the match-stage bias
propagates into what the advertiser actually sees.

\subsection{Split of non-focal advertisers across ad types}
\label{app:sim-split}

We assume the $N_i - 1$ non-focal (``incumbent'') advertisers are split
near-equally across the two ad types. In the reported specification the
split is 250 Ad A advertisers and 249 Ad B advertisers (with the focal
being the 500th).

This near-equal split serves two purposes. First, it keeps the two ad
types symmetric in the population, so the ad-type $\times$ gender noise
structure is the only source of asymmetry between $A$ and $B$. Second, it
means that both $r_{u,i}$ and $s_{u,i}$ are advertiser-specific: the ad
type interacts with the advertiser identity to produce the pair
$(r_{u,i}, s_{u,i})$, which in turn drives the auction outcome. If we
instead set all incumbents to a single ad type, the focal advertiser's
signal profile would be identifiable from the incumbent baseline, and the
three-arm construction would collapse.


\clearpage

%
%
%

\section{The effect of the focal bid: how the matching-stage algorithm bias transfers to the bid-winner stage}
\label{app:bid-transfer}

The main text establishes the algorithm bias as a property of the matching
stage: at $\sigma = 0.030$, the focal's matched pool is $53.7\%$ female in
arm $C$ and $69.9\%$ female in arm $T_1$, giving
$\widehat{\mathrm{AFE}}^{\mathrm{match}} = +16.22$ ppt with a tight
Monte-Carlo confidence interval. The advertiser, however, does not observe
matches. She observes \emph{impressions}: the subset of matches that
survive the second-price auction. This appendix formalizes the distinction
between the two measures, states the match-stage invariance lemma, and
documents how the matching-stage bias transfers into the impression-stage
estimand across a range of focal bid levels. It closes with a
design-integrity check confirming that the arms do not distort the
platform's overall serve rate.

\subsection{Two natural measures of the algorithm's bias}
\label{app:two-measures}

The algorithm's targeting bias lives at the matching stage: $M(u)$ is
where the platform decides which advertisers compete for user $u$, and
this decision uses only the signal, not any bid. The focal advertiser,
however, only sees the impressions it wins, which are a filtered sample of
$M(u)$ picked by the second-price auction. So there are two natural
fractions to look at in each arm $z \in \{C, T_1, T_2\}$:
\begin{itemize}
\item $\pi^{\text{match}}_F(z)$: the fraction of the focal's matched users
that are female;
\item $\pi^{\text{imp}}_F(z, b_0)$: the fraction of the focal's impressions
that are female, for a given focal bid $b_0$.
\end{itemize}
The first is a property of the top-$qN_i$ selection alone; the second is
what the advertiser can actually see. This gives us two versions of AFE,
one at the match stage and one at the impression stage.

\subsection{Match-stage bias does not depend on bids}
\label{app:lemma-match-invariance}

The following result formalizes the point that the match-stage AFE is a
structural property of the signal, independent of both the focal's bid and
the incumbent bid distribution.

\begin{lemma}[Match-stage bias does not depend on bids]
\label{lem:sim-match-invariance}
The match-stage AFE, $\pi^{\text{match}}_F(T_1) - \pi^{\text{match}}_F(C)$,
takes the same value for every choice of the focal bid $b_0$ and every
choice of the bid distribution $F_v$.
\end{lemma}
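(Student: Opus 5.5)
The plan is to show that, in each arm $z$, the focal's matched set is a measurable function of the click rates, the signal noise and the genders alone, with no bid entering. The match-stage female share $\pi^{\text{match}}_F(z)$ is then a functional of the joint law of $(g_u, \{r_{u,i}\}_i, \{\varepsilon_{u,i}\}_i)$ and the arm-specific noise variances $\sigma^2_{a(i),g_u}$. Neither $b_0$ nor $F_v$ appears in that law, so the difference $\pi^{\text{match}}_F(T_1) - \pi^{\text{match}}_F(C)$ cannot depend on them. I will define $\pi^{\text{match}}_F(z)$ as the ratio of expectations
\[
\pi^{\text{match}}_F(z) = \frac{\E\bigl[\sum_u \mathbf{1}\{g_u = F\}\mathbf{1}\{0 \in M_z(u)\}\bigr]}{\E\bigl[\sum_u \mathbf{1}\{0 \in M_z(u)\}\bigr]},
\]
or as its per-round analogue. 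The argument is identical for either definition, because both are functionals of the same bid-free random objects.

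First, I make Stage 1 explicit. For each user $u$, $M_z(u)$ is the set of the top $\lceil qN_i\rceil$ advertisers ranked by $s^{(z)}_{u,i} = r_{u,i} + \varepsilon^{(z)}_{u,i}$, with ties broken by a fixed rule that does not use bids. Hence the event $\{0 \in M_z(u)\}$ is a deterministic function of $(s^{(z)}_{u,i})_{i}$. Second, I record the sampling structure. The values $v_i \sim F_v$, and therefore the bids $b_i = v_i$ and the focal bid $b_0$, are drawn independently of $(g_u, r_{u,i}, \varepsilon_{u,i})$; this is how the simulator is specified. The noise variances depend only on $(a(i), g_u)$ and on the arm, and the arm table shows they do not depend on bids. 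Third, I conclude. Bids enter the mechanism only in Stage 2, which selects the winner within $M_z(u)$ but never alters $M_z(u)$ itself. So the numerator and denominator above are integrals against a law that is free of $(b_0, F_v)$. Replacing $b_0$ or $F_v$ therefore leaves each $\pi^{\text{match}}_F(z)$ unchanged, and hence leaves their difference unchanged. If one wants a pathwise statement, one can couple the simulation across bid choices with common random numbers for $(g, r, \varepsilon)$. Under that coupling the matched sets coincide realization by realization, so the per-round and Monte-Carlo match-stage AFE are also invariant exactly, not merely in expectation.

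The main obstacle is definitional rather than technical. I need to confirm that nothing in Stage 1 depends on bids, for example that the quality score is not bid-weighted and that matching is not restricted to advertisers whose bid clears a reserve. The footnote in Section~\ref{sec:simulation} stating that ``the auction uses the bid alone'' and that ``$s$ determines auction participation but does not enter the within-auction ranking'' supplies exactly this separation, and I will cite it explicitly. A secondary point is to handle $\pi^{\text{match}}_F(z)$ being undefined when the focal is never matched. Since $q>0$ and the signals have continuous distributions, $\Prb(0\in M_z(u))>0$, so the denominator is positive.
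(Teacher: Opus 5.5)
Your proof is correct and is essentially the paper's argument. The paper also notes that $M(u)$ is fixed by the signals $s_{u,i}=r_{u,i}+\varepsilon_{u,i}$ alone, so in each arm the joint law of $\bigl(g_u,\mathbf{1}[0\in M(u)]\bigr)$, and therefore $\pi^{\text{match}}_F(z)$ and the $T_1$-minus-$C$ difference, cannot depend on $b_0$ or $F_v$; your extras (values drawn independently of $(g,r,\varepsilon)$, bid-free tie-breaking, positivity of the conditioning event, which rests on overlapping supports rather than continuity per se, and the common-random-numbers coupling for exact pathwise invariance) only make explicit what the paper leaves implicit.
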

\begin{proof}
The matched pool $M(u)$ is decided only by the signal values
$s_{u,i} = r_{u,i} + \varepsilon_{u,i}$, and neither $r_{u,i}$ nor
$\varepsilon_{u,i}$ uses any bid. So the joint distribution of
$\bigl(g_u,\ \mathbf{1}[0 \in M(u)]\bigr)$ under any given arm is the same
no matter what the bids are. Hence the two probabilities that define
$\pi^{\text{match}}_F(z)$,
\[
\pi^{\text{match}}_F(z) \;=\; \Prb\bigl(g_u = F \,\bigm|\, 0 \in M(u),\ \text{arm} = z\bigr),
\]
take the same value for every choice of the focal bid $b_0$ and every bid
distribution $F_v$. Their difference
$\pi^{\text{match}}_F(T_1) - \pi^{\text{match}}_F(C)$ therefore does too.
\end{proof}

\subsection{Impression-stage AFE across bid percentiles}
\label{app:bid-transfer-chart1b}

Figure~\ref{fig:sim-1b} plots the impression-stage
$\widehat{\mathrm{AFE}}$, $\widehat{\mathrm{ACE}}$, and
$\widehat{\mathrm{TE}}$ across the three focal bid levels of the full run
--- the 95th, 98th, and 99.5th percentiles of the incumbent bid
distribution $F_v$. The impression-stage $\widehat{\mathrm{AFE}}$ is
$+16.40$ (SE $0.12$), $+16.21$ (SE $0.05$), and $+16.21$ (SE $0.03$)
percentage points respectively: statistically indistinguishable from the
match-stage anchor of $+16.22$ ppt at every level.

At the 95th percentile
the focal wins only about $8\%$ of the users it is matched to, so the
estimate rests on far fewer impressions and its Monte-Carlo standard error
is four times larger --- but the point estimate is already at the anchor.
Because the auction's win filter is gender-neutral conditional on
matching, the impression-stage estimator is unbiased for the match-stage
AFE at any bid; a low bid produces a less precise estimate.

\begin{figure}[h]
\centering
\includegraphics[width=0.75\textwidth]{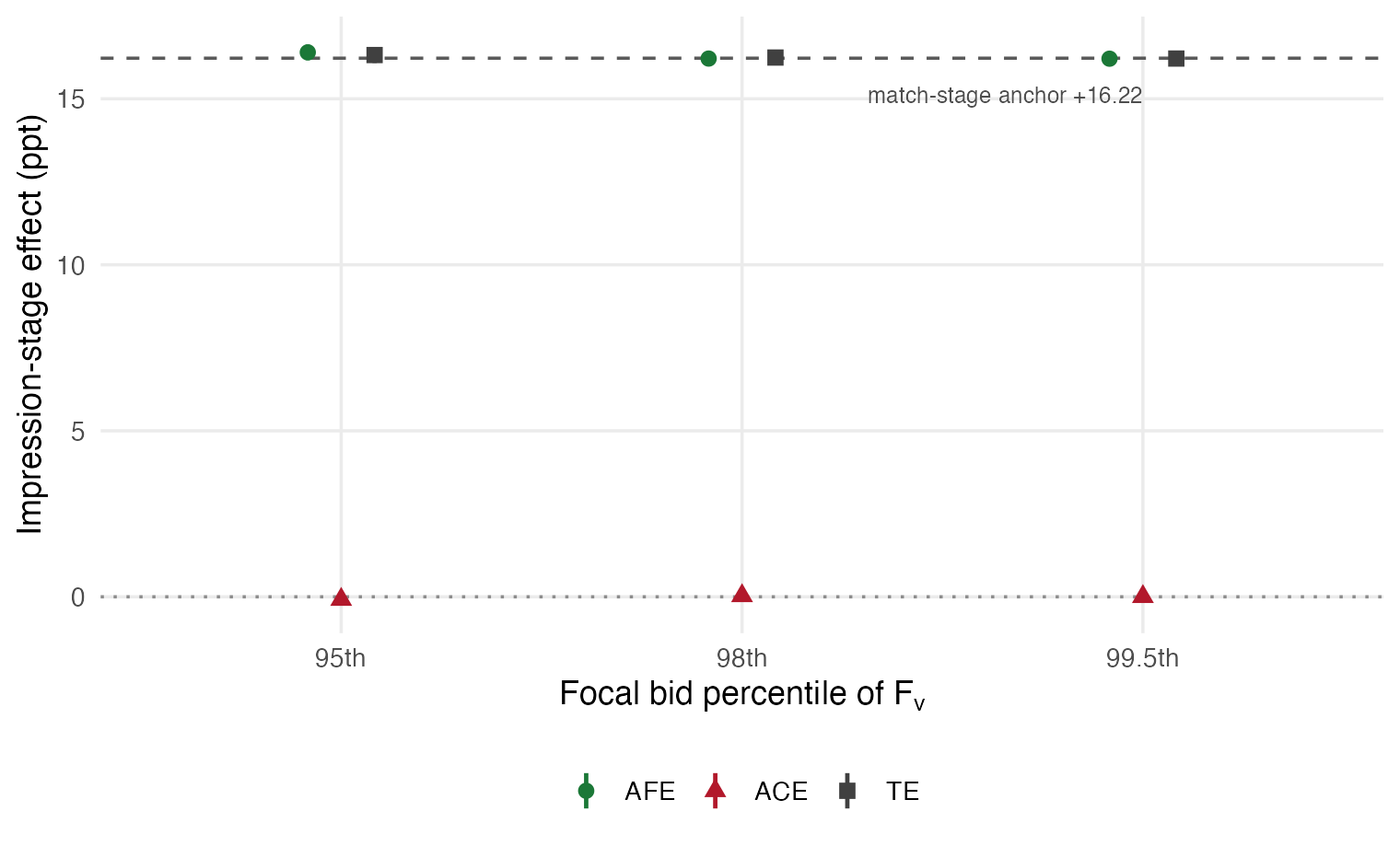}
\caption{Impression-stage $\widehat{\mathrm{AFE}}$, $\widehat{\mathrm{ACE}}$, and $\widehat{\mathrm{TE}}$ across focal bid levels (95th, 98th, 99.5th percentiles of $F_v$).}
\label{fig:sim-1b}
\end{figure}

\subsection{The visibility ratio: how much of the bias the advertiser sees}
\label{app:bid-transfer-chartB}

Figure~\ref{fig:sim-B} recasts the same content as a ratio: the
impression-stage AFE divided by the match-stage AFE. This ratio is a
measure of how much of the algorithm's structural bias becomes visible in
the impressions the focal wins.

\begin{figure}[h]
\centering
\includegraphics[width=0.75\textwidth]{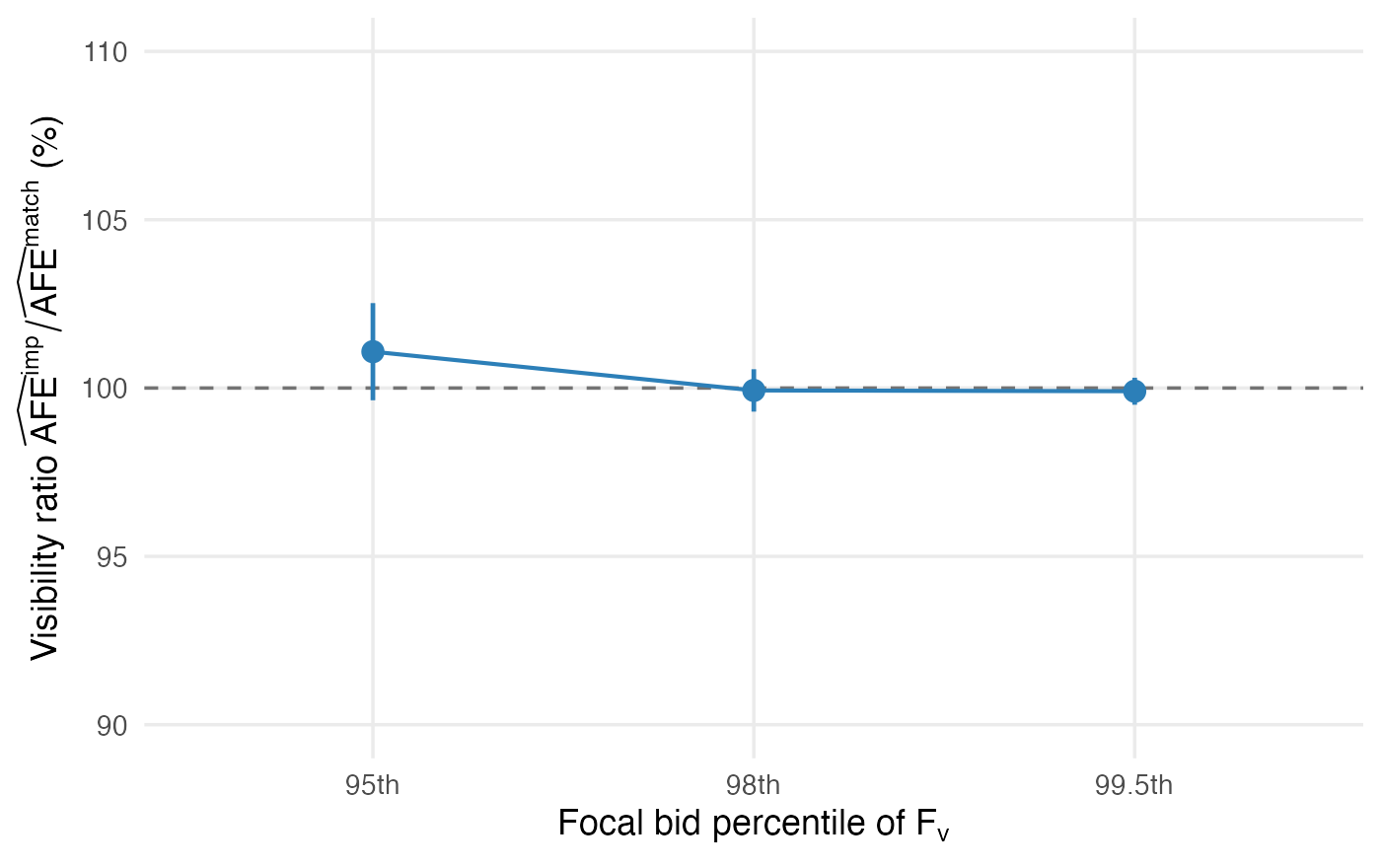}
\caption{Visibility ratio $\widehat{\mathrm{AFE}}^{\mathrm{imp}} / \widehat{\mathrm{AFE}}^{\mathrm{match}}$ across the same bid levels.}
\label{fig:sim-B}
\end{figure}

\subsection{Design integrity: no arm-level interference}
\label{app:no-interference}

Figure~\ref{fig:sim-C} plots the focal's wins per round in all
three arms across the same three bid levels. The $T_1$ and $T_2$ lines sit
on top of each other at every bid, and $C$ sits below them by the amount
the matching bias predicts. This indicates that the arms only reshuffle
impressions among matched users; they do not change how many ads the
platform serves overall. In other words, the three-arm design cleanly
identifies AFE and ACE without contaminating the total serve rate.

\begin{figure}[h]
\centering
\includegraphics[width=0.75\textwidth]{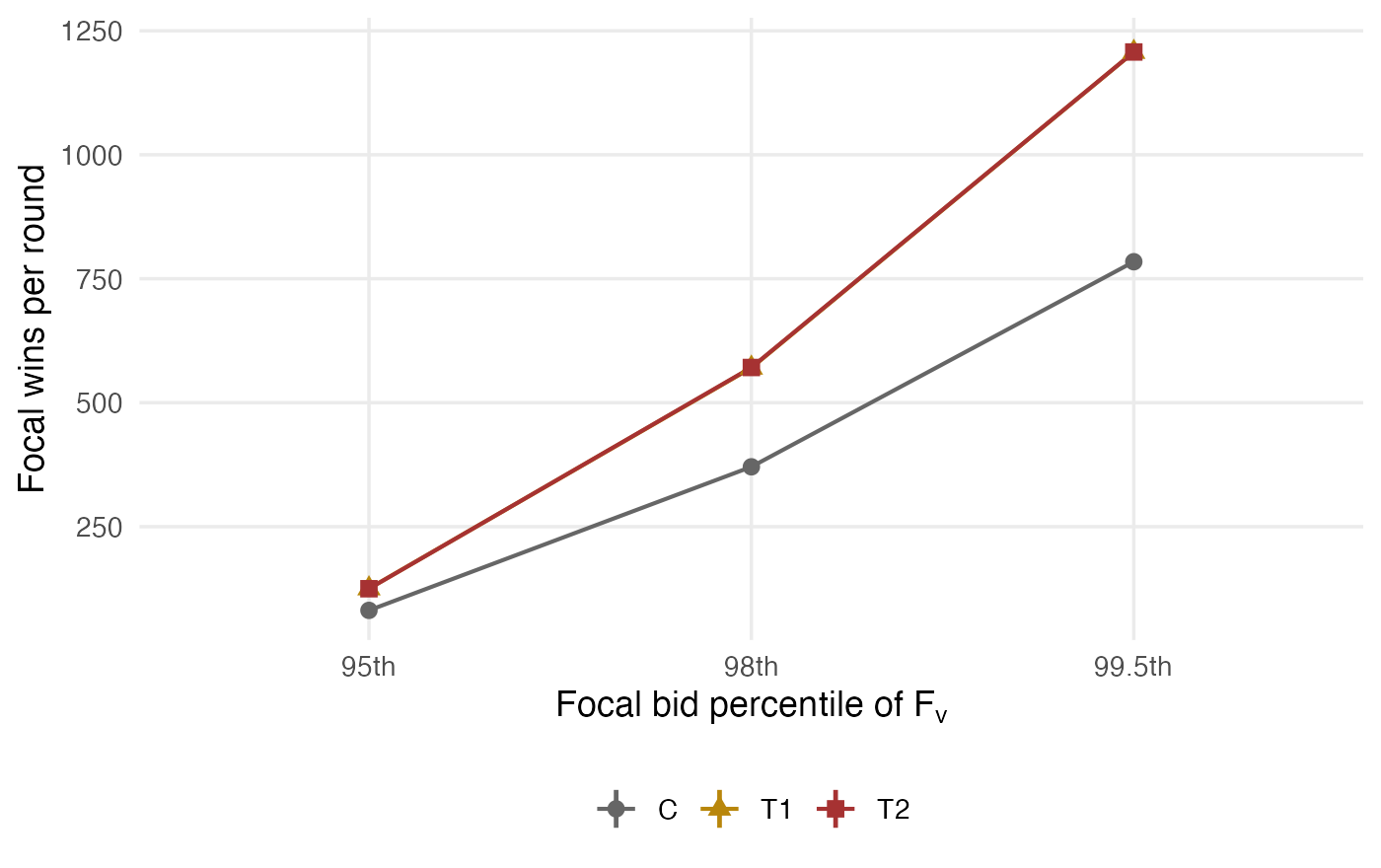}
\caption{Focal's wins per round in each arm, across three bid percentiles.}
\label{fig:sim-C}
\end{figure}

\subsection{Interpretation: structural bias vs.\ visibility}
\label{app:bid-transfer-interp}

The three charts together deliver a single interpretation. The
algorithm's targeting bias is fixed at $+16.2$ ppt
(Lemma~\ref{lem:sim-match-invariance}): it is a structural property of the
matching stage and it does not depend on the bid. What varies with
the bid is only the \emph{precision} with which that bias is estimated
from impression data: a low-bidding focal wins few auctions, so the
impression-stage estimate rests on a small sample and its Monte-Carlo
standard error widens ($0.12$ at the 95th percentile versus $0.03$ at the
99.5th). The point estimate itself does not move: even at the 99.5th
percentile the focal wins roughly $78\%$ of its matched users, yet the
impression-stage estimate already coincides with the match-stage anchor
at every level, because the second-price filter is gender-neutral
conditional on matching.

\paragraph{Empirical implication.}
The flat bid profile has a direct empirical consequence: two
advertisers running the identical creative at different percentiles of
$F_v$ measure the \emph{same} impression-stage AFE in expectation --- the
lower bidder simply measures it less precisely. The three-arm estimator
applied at the impression level therefore recovers the match-stage
structural bias at any bid position; cross-experiment comparisons need
alignment on bid position only for the precision of the estimate.

At the 95th, 98th, and 99.5th percentiles of $F_v$, the simulated
impression-stage $\widehat{\mathrm{AFE}}$ is $+16.40$ ($0.12$), $+16.21$
($0.05$), and $+16.21$ ($0.03$) percentage points respectively, with the
$\widehat{\mathrm{ACE}}$ statistically zero at every level. The
algorithm bias is thus operative, and measurable from impression data,
independently of the bid --- matching uses the signal and not the bid, and
the auction passes the matched composition through to wins essentially
undistorted; what rises with the bid is the precision of the estimate.


\clearpage

\section{The ACE in the simulation across bid levels}
\label{app:sim-nde-placebo}

Under the no-creative-effect setting of Section~\ref{sec:simulation}, both versions of the ACE come back at zero, and the estimates agree. At the match stage, $\widehat{\mathrm{ACE}}^{\text{match}} = +0.00$ ppt with 95\% CI $[-0.04,\ +0.04]$. At the impression stage, the ACE stays inside a narrow band of zero at every bid level; the tightest reading is at the 99.5th percentile of $F_v$, where the CI is $[-0.04,\ +0.05]$. This gives us two useful checks at once: the simulator is not fabricating a creative effect, and the paired-difference estimator is not picking up a spurious signal.

\begin{figure}[h]
\centering
\includegraphics[width=0.75\textwidth]{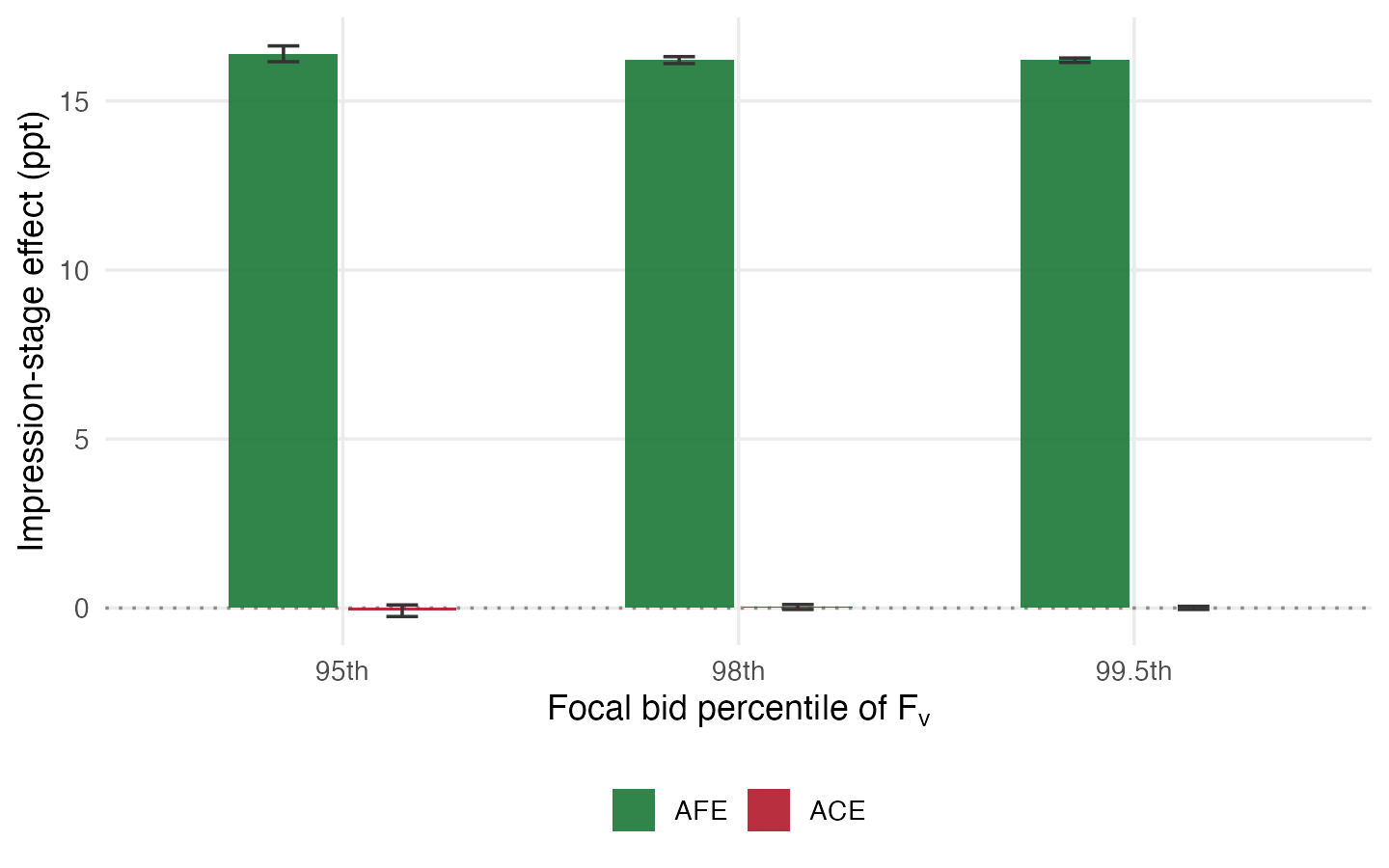}
\caption{Point estimates of $\widehat{\mathrm{AFE}}$ (algorithm filtering channel) and $\widehat{\mathrm{ACE}}$ (creative channel) at three focal bid percentiles (95th, 98th, 99.5th) with 95\% CIs.}
\label{fig:sim-1a}
\end{figure}

\section{Additional simulation figures: gender and click-rate heterogeneity}
\label{app:sim-heterogeneity}

Figures~\ref{fig:sim-2a}--\ref{fig:sim-2c} decompose the algorithm channel by user gender. Figures~\ref{fig:sim-3a}--\ref{fig:sim-3c} decompose it further by (click-rate quintile $\times$ gender), where the quintiles are ordered by the average predicted click rate $\bar r_u$ across all advertisers. Together they show that the matching-channel effect is concentrated entirely on female users and is roughly uniform across click-rate quintiles: the bias is a treatment on gender, not a differential response by user quality.

\begin{figure}[h]
\centering
\begin{subfigure}[t]{0.48\textwidth}
\centering
\includegraphics[width=\textwidth]{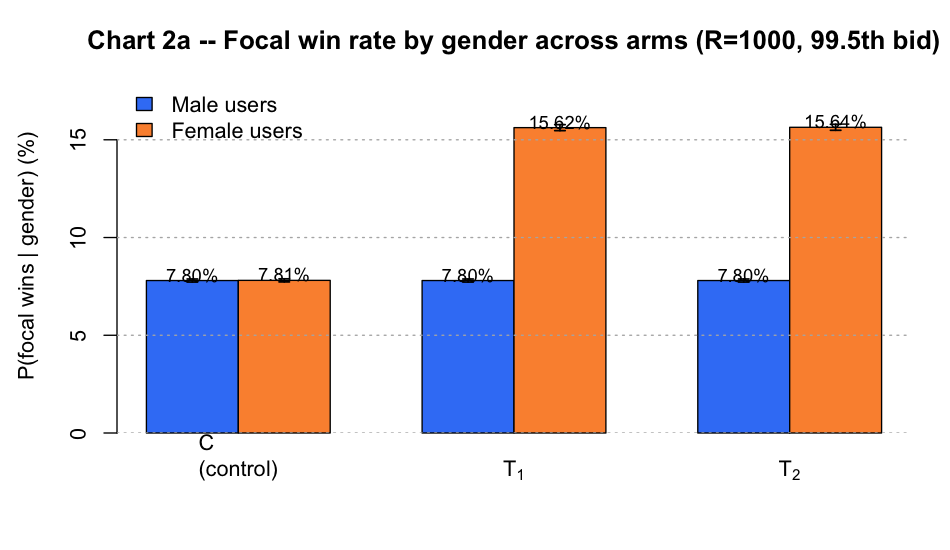}
\caption{$P(\text{focal wins}\mid\text{gender})$ by arm, at the 99.5th percentile bid.}
\label{fig:sim-2a}
\end{subfigure}\hfill
\begin{subfigure}[t]{0.48\textwidth}
\centering
\includegraphics[width=\textwidth]{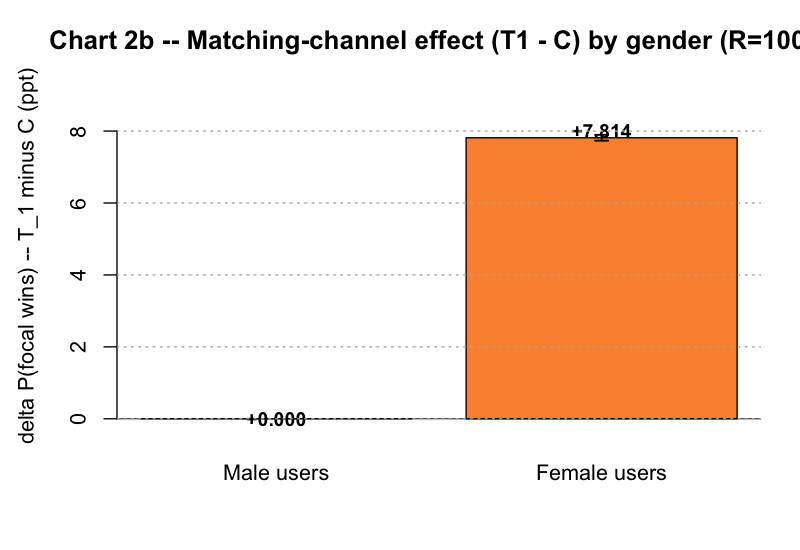}
\caption{Matching-channel effect ($T_1 - C$) on the win rate, by gender.}
\label{fig:sim-2b}
\end{subfigure}

\vspace{0.6em}

\begin{subfigure}[t]{0.62\textwidth}
\centering
\includegraphics[width=\textwidth]{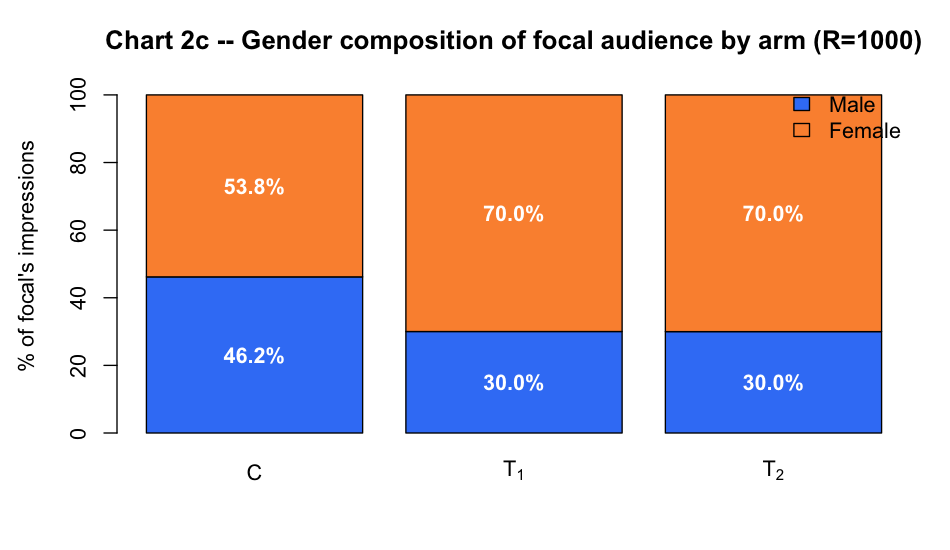}
\caption{Gender composition of the focal's audience by arm.}
\label{fig:sim-2c}
\end{subfigure}
\caption{Gender heterogeneity of the matching channel.}
\label{fig:sim-gender-het}
\end{figure}

\begin{figure}[h]
\centering
\begin{subfigure}[t]{0.48\textwidth}
\centering
\includegraphics[width=\textwidth]{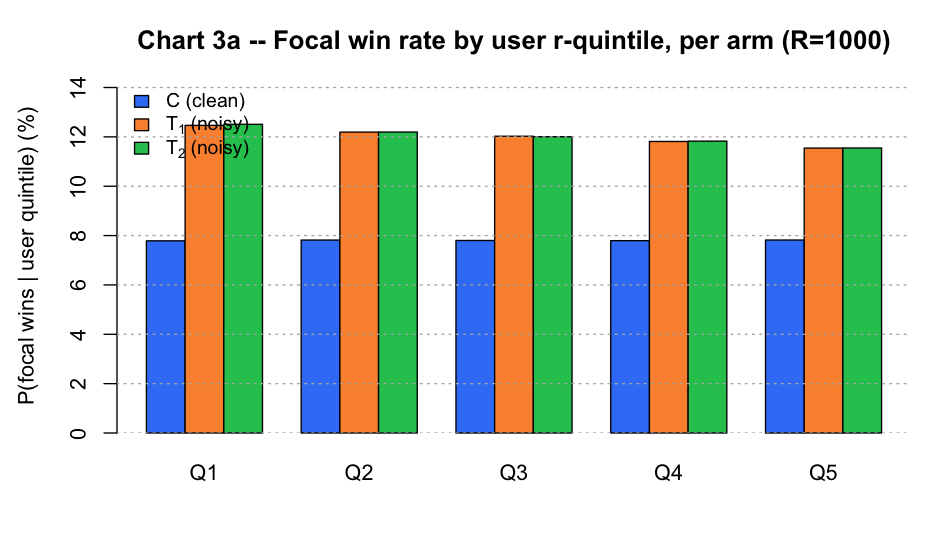}
\caption{Focal win rate by user click-rate quintile ($\bar r_u$), per arm.}
\label{fig:sim-3a}
\end{subfigure}\hfill
\begin{subfigure}[t]{0.48\textwidth}
\centering
\includegraphics[width=\textwidth]{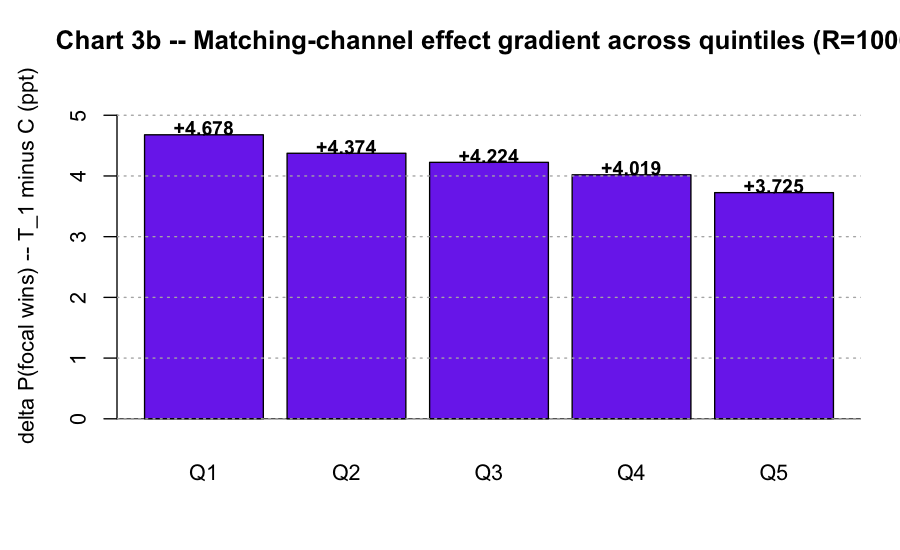}
\caption{Matching-channel effect ($T_1 - C$) by click-rate quintile, pooled across gender.}
\label{fig:sim-3b}
\end{subfigure}

\vspace{0.6em}

\begin{subfigure}[t]{0.62\textwidth}
\centering
\includegraphics[width=\textwidth]{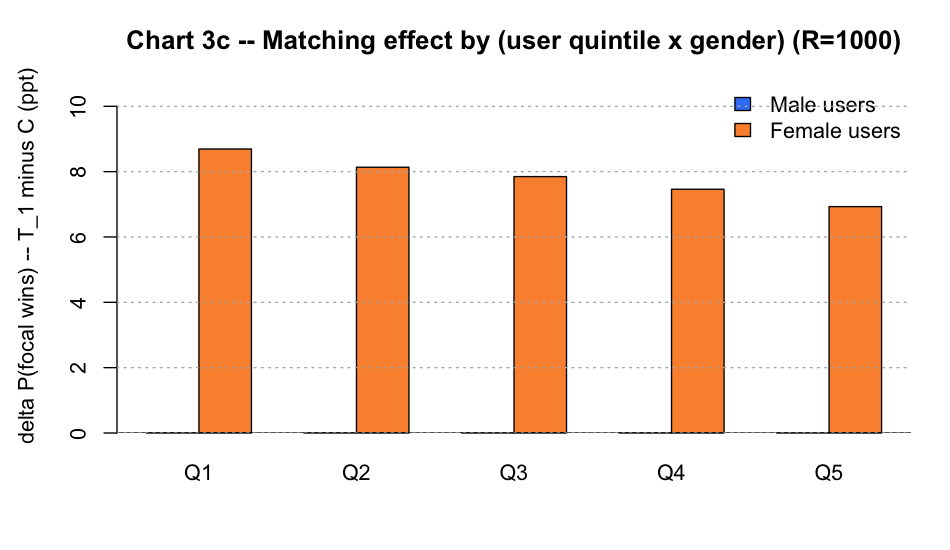}
\caption{Matching-channel effect by (click-rate quintile $\times$ gender).}
\label{fig:sim-3c}
\end{subfigure}
\caption{Click-rate quintile heterogeneity of the matching channel.}
\label{fig:sim-r-het}
\end{figure}

\clearpage

\section{Wild-bootstrap Romano-Wolf adjustment}
\label{app:wildboot}

The main-text tables report analytical clustered (CR2) standard errors on point estimates from a design with 19 (day $\times$ time-slot) clusters at high bid. The finite-sample distribution of the clustered $t$-statistic may deviate from its asymptotic approximation, and the three pairwise contrasts AFE, ACE, and TE constitute a family for which joint significance requires family-wise error control.

We construct a robustness check by resampling the empirical residual distribution. For $B = 5{,}000$ iterations, we draw Rademacher weights $\omega_b \in \{-1, +1\}$ independently at the (day $\times$ time-slot) cluster level, apply each cluster's weight to all of its rows, form $y^*_i = \hat{y}_i + \omega_{b(i)} \hat{\varepsilon}_i$, refit the specification of Section~\ref{sec:inference}, recompute clustered standard errors, and record the centered $t$-statistic $t^*_{b,k} = (\hat{\beta}^*_{b,k} - \hat{\beta}_k)/\widehat{\mathrm{SE}}(\hat{\beta}^*_{b,k})$. Cluster-level (wild cluster) resampling is the correct pairing for the clustered analytic standard errors. The Romano-Wolf step-down procedure controls the family-wise error rate across the three pairwise contrasts, accounting for their joint dependence structure.

\begin{table}[h]
\centering
\caption{Wild-bootstrap Romano-Wolf adjusted $p$-values on the aggregate decomposition, high-bid experiment.}
\label{tab:wildboot}
\begin{tabular}{lrrrr}
\toprule
Contrast & Estimate (SE) & Analytical $p$ & Bootstrap $p_{\mathrm{unadj}}$ & Romano-Wolf $p_{\mathrm{adj}}$ \\
\midrule
$\widehat{\mathrm{AFE}}$ & $+2.27$ \,(0.77) & $0.003$ & $0.018$ & $0.069$ \\
$\widehat{\mathrm{ACE}}$ & $-0.93$ \,(0.56) & $0.095$ & $0.152$ & $0.240$ \\
$\widehat{\mathrm{TE}}$  & $+1.33$ \,(1.01) & $0.186$ & $0.236$ & $0.240$ \\
\bottomrule
\end{tabular}
\par\smallskip
\footnotesize\emph{Note.} Standard errors in parentheses. $B = 5{,}000$.
\end{table}

The bootstrap procedure is more conservative than the analytical clustered $t$-test with 19 clusters: the Romano-Wolf adjusted $p$-value for the algorithm-channel AFE is $0.069$, compared with $0.003$ under the analytical test. Point estimates are unchanged.

\clearpage

\section{Further details on the empirical analysis}
\label{app:further-empirical}

\subsection{Campaign configurations held constant across arms}
\label{app:campaign-config}

The following configurations are identical across the three arms within each bid level. This structural symmetry ensures that any cross-arm difference in impression composition is attributable to the ad creative (the intended manipulation) rather than to campaign-configuration heterogeneity of the sort documented by \citet{burtch2025characterizing}.

\begin{itemize}
\item Campaign objective (traffic) and optimization goal.
\item Target audience definition (geography, age range, broad interests).
\item Campaign budget cap per cell.
\item Ad placement: manual, restricted to a single surface (Facebook news feed) to preclude placement-driven divergent delivery \citep{burtch2025characterizing}.
\item Frequency cap of approximately one impression per user (observed impression-to-reach ratio $\approx 1.009$).
\item Schedule and duration.
\item Ad format: static image.
\end{itemize}

The ad creative and the bid amount are the two variables intentionally varied. The ad creative is the manipulation of interest; the bid amount is a secondary source of heterogeneity retained for robustness (Appendix~\ref{app:bid-heterogeneity}).

\subsection{Descriptive gender breakdown and supporting figures}
\label{app:descriptive-extras}

\subsubsection*{Gender breakdown of arm-level totals}

Table~\ref{tab:gender} disaggregates the arm-level totals of Table~\ref{tab:armlevel} by gender. The female-minus-male difference in impressions is stable across arms ($\Delta \approx -9{,}000$), consistent with the algorithm delivering more impressions to men across all three arms. The corresponding difference in CTR widens from $0.000$ pp in $C$ to $+0.084$ pp in $T_2$, and the female CPM premium widens from $+\$0.04$ in $C$ to $+\$0.16$ in $T_2$.

\begin{table}[h]
\centering
\small
\caption{Gender breakdown, high-bid arms (gender $\in \{$female, male$\}$).}
\label{tab:gender}
\begin{tabular}{llrrrrr}
\toprule
Arm & Gender & Impressions & Reach & Clicks & CTR (\%) & CPM (\$) \\
\midrule
$C$   & Female      & 11{,}595 & 11{,}515 &   8 &  0.069 & 4.099 \\
$C$   & Male        & 21{,}730 & 21{,}527 &  15 &  0.069 & 4.062 \\
$C$   & $\Delta$(F$-$M) & $-10{,}135$ & $-10{,}012$ & $-7$ & $+0.000$ & $+0.037$ \\
\midrule
$T_1$ & Female      & 12{,}267 & 12{,}149 &  13 &  0.106 & 4.184 \\
$T_1$ & Male        & 21{,}010 & 20{,}822 &  19 &  0.090 & 4.075 \\
$T_1$ & $\Delta$(F$-$M) & $-8{,}743$ & $-8{,}673$ & $-6$ & $+0.016$ & $+0.109$ \\
\midrule
$T_2$ & Female      & 12{,}141 & 12{,}079 &  17 &  0.140 & 4.179 \\
$T_2$ & Male        & 21{,}413 & 21{,}258 &  12 &  0.056 & 4.023 \\
$T_2$ & $\Delta$(F$-$M) & $-9{,}272$ & $-9{,}179$ & $+5$ & $+0.084$ & $+0.157$ \\
\bottomrule
\end{tabular}
\end{table}

\subsubsection*{Age composition}

Figure~\ref{fig:age-AFE-app} reports the $T_1 - C$ shift in age-band impression share (gender pooled), the algorithm-channel-only contrast on age composition: the algorithm channel reallocates working-age impressions toward women.

\begin{figure}[ht]
\centering
\includegraphics[width=0.78\textwidth]{figures/fig8d_T1_vs_C_AFE.png}
\caption{$T_1 - C$ shift in age-band impression share, gender pooled (high bid). 95\% CIs; $*$ marks $|t|>1.96$.}
\label{fig:age-AFE-app}
\end{figure}

\subsection{Heterogeneity by bid level}
\label{app:bid-heterogeneity}

Replicating the decomposition at the low-bid experiment provides a second source of heterogeneity. Table~\ref{tab:bid-het} reports the decomposition of female impression share at each bid level.

\begin{table}[h]
\centering
\caption{Decomposition of female impression share by bid level.}
\label{tab:bid-het}
\begin{tabular}{lrr}
\toprule
Quantity (pp) & High bid: est (SE) & Low bid: est (SE) \\
\midrule
$\widehat{\mathrm{AFE}}$ on female share & $+2.27$ \,(0.77) & $+0.16$ \,(4.56) \\
$\widehat{\mathrm{ACE}}$ on female share & $-0.93$ \,(0.56) & $+0.12$ \,(7.94) \\
$\widehat{\mathrm{TE}}$  on female share & $+1.33$ \,(1.01) & $+0.29$ \,(7.01) \\
\bottomrule
\end{tabular}
\par\smallskip
\footnotesize\emph{Note.} Percentage points; standard errors in parentheses. Low-bid $n = 45$ cells.
\end{table}

The low-bid arms accumulated approximately $1{,}600$ impressions total, less than $5\%$ of the high-bid sample; the resulting standard errors exceed $4.5$~pp on every contrast and no low-bid coefficient is significantly distinguishable from zero. Only the precision of the decomposition differs across bid levels --- consistent with the simulation extension in Appendix~\ref{app:bid-transfer}, in which a lower bid reduces the precision of the impression-stage estimate, not its expected value. Formal inference is not possible on the low-bid arms given the sample size.

\subsection{Robustness across fixed-effects specifications}
\label{app:fe-robustness}

In a randomized experiment the arm coefficients are unbiased regardless of which fixed effects are included; the gain from adding controls is a reduction in residual variance and, correspondingly, tighter standard errors on the treatment coefficients. Table~\ref{tab:fe-robust} reports the aggregate decomposition under each combination of day and time-slot fixed effects.

\begin{table}[h]
\centering
\caption{Aggregate decomposition under alternative fixed-effects specifications, high-bid experiment.}
\label{tab:fe-robust}
\begin{tabular}{lrrr}
\toprule
Specification & $\widehat{\mathrm{AFE}}$ & $\widehat{\mathrm{ACE}}$ & $\widehat{\mathrm{TE}}$ \\
\midrule
(0) No FE                 & $+2.07$ \,(1.66) & $-0.68$ \,(1.90) & $+1.39$ \,(1.96) \\
(1) Day FE only           & $+2.06$ \,(0.85) & $-0.65$ \,(0.74) & $+1.42$ \,(0.93) \\
(2) Slot FE only          & $+2.11$ \,(1.13) & $-0.77$ \,(1.15) & $+1.34$ \,(1.24) \\
(3) Day + slot FE         & $+2.08$ \,(0.84) & $-0.68$ \,(0.70) & $+1.39$ \,(0.91) \\
\bottomrule
\end{tabular}
\par\smallskip
\footnotesize\emph{Note.} Percentage points; standard errors in parentheses. Aggregate-cell estimates; the primary estimates come from equation~\eqref{eq:main-spec}.
\end{table}

\subsection{Per-cell decomposition at the twelve (age $\times$ gender) cells}
\label{app:cell-decomp}

Table~\ref{tab:cell-decomp-app} disaggregates the decomposition of Table~\ref{tab:delivery-high} to the twelve individual (age $\times$ gender) cells. The finer decomposition is consistent with the age-band reading of Section~\ref{sec:empirical-age-gender} --- the algorithm channel's largest positive point estimates are in the 35--44 female ($+1.52$ pp) and 45--54 male ($+0.74$ pp) cells, with the largest negative estimate in the 65+ male cell ($-1.64$ pp) --- but the standard errors are wider than in the (working-age $\times$ gender) partition because each cell captures a smaller share of impressions.

\begin{table}[h]
\centering
\small
\caption{Per-cell decomposition of female impression share at (age $\times$ gender), high-bid experiment.}
\label{tab:cell-decomp-app}
\begin{tabular}{llrrr}
\toprule
Age & Gender & $\widehat{\mathrm{AFE}}$ & $\widehat{\mathrm{ACE}}$ & $\widehat{\mathrm{TE}}$ \\
\midrule
18--24 & female & $+0.13$ \,(0.07) & $-0.10$ \,(0.08) & $+0.04$ \,(0.09) \\
18--24 & male   & $+0.23$ \,(0.12) & $-0.21$ \,(0.12) & $+0.01$ \,(0.10) \\
25--34 & female & $+0.50$ \,(0.48) & $-0.87$ \,(0.37) & $-0.37$ \,(0.51) \\
25--34 & male   & $-0.88$ \,(1.21) & $-0.84$ \,(0.89) & $-1.72$ \,(1.27) \\
35--44 & female & $+1.52$ \,(0.53) & $-0.99$ \,(0.48) & $+0.53$ \,(0.58) \\
35--44 & male   & $+0.23$ \,(0.50) & $-0.94$ \,(0.45) & $-0.71$ \,(0.33) \\
45--54 & female & $+0.71$ \,(0.24) & $-0.43$ \,(0.19) & $+0.28$ \,(0.24) \\
45--54 & male   & $+0.74$ \,(0.31) & $-0.52$ \,(0.24) & $+0.23$ \,(0.32) \\
55--64 & female & $+0.33$ \,(0.24) & $+0.23$ \,(0.27) & $+0.57$ \,(0.24) \\
55--64 & male   & $-0.75$ \,(0.32) & $+1.16$ \,(0.34) & $+0.41$ \,(0.36) \\
65+    & female & $-1.11$ \,(0.34) & $+1.48$ \,(0.38) & $+0.36$ \,(0.41) \\
65+    & male   & $-1.64$ \,(0.45) & $+2.04$ \,(0.57) & $+0.40$ \,(0.61) \\
\bottomrule
\end{tabular}
\par\smallskip
\footnotesize\emph{Note.} Percentage points; standard errors in parentheses. Separate weighted OLS per cell with day and time-slot fixed effects; HC1 SEs.
\end{table}

\section{Further details on the methodology}
\label{app:methodology-details}

This appendix collects supporting material for the methodology developed in
Section~\ref{sec:methodology}. Section~\ref{app:no-regression}
explains why a naive regression adjustment for $S$ does not recover the direct
effect. Section~\ref{app:threats} discusses two threats to design
validity---interference and power. Section~\ref{app:cde-vs-nde} distinguishes
the population-average decomposition our design identifies from the unit-level
natural direct effect. Section~\ref{app:inference-details} lays out the
inference machinery: cluster-robust standard errors, the wild cluster
bootstrap, and multiple-testing adjustment via Romano-Wolf.

\subsection{Why a regression adjustment for $S$ is not a fix}
\label{app:no-regression}

A common but biased remedy for the confounding described in
Section~\ref{sec:methodology} is to include $S$ as a control variable
in a regression of $Y$ on the treatment indicator:
\begin{equation}
Y_i \;=\; \alpha + \beta D_i + \gamma S_i + \varepsilon_i.
\label{eq:bk-app}
\end{equation}
This is the structure underlying \allowbreak{the classic Baron--Kenny regression approach}; the bias
of the approach is \allowbreak{detailed in \citet{vanderweele2015explanation}}. Conditioning on a post-treatment variable
that is itself caused by the treatment opens a non-causal path
between $D$ and $\varepsilon$, so $\hat{\beta}$ in~\eqref{eq:bk-app}
does not estimate the direct effect even under random assignment of
$D$. The unbiasedness conferred by randomization is broken the moment
a post-treatment regressor is introduced. The only assumption-light
remedy is to make $S$ itself respond to a randomization that is
independent of treatment assignment---in other words, to redesign the
experiment.

\subsection{Threats to design validity}
\label{app:threats}

Two issues warrant attention at the design stage: interference between
subjects and statistical power under the three-way sample split.

\paragraph{Spillovers and SUTVA.}
If subjects communicate, or if $S$ has equilibrium feedback across
units, the three-arm design must be combined with cluster-level
randomization or exposure-mapping analysis
\citep{aronow2017estimating}. The decomposition of
Section~\ref{sec:identification} continues to hold within exposure
conditions, but the conditioning set must be made explicit. In the
ad-delivery setting the most plausible spillover channel is algorithmic
learning across arms: if the platform's serving rule updates from
engagement feedback observed in one arm, that update can leak into
another arm's realized $S$. In the field experiment we limit this concern by restricting the
sample to the campaign's initial week, within which engagement feedback
has little scope to update the serving rule; the Assumption-(A3) check
of Section~\ref{sec:empirical-delivery} does not reject.

\paragraph{Power.}
Splitting the sample three ways reduces the precision of any single
pairwise contrast by roughly a factor of $\sqrt{2/3}$ compared with a
two-arm design at the same total sample size. Because $T_1$ enters
both $\widehat{\mathrm{ACE}}$ and $\widehat{\mathrm{AFE}}$, an unequal
allocation favoring $T_1$ minimizes the total variance of the
decomposition: with equal outcome variances across arms, the sum
$\Var(\widehat{\mathrm{AFE}}) + \Var(\widehat{\mathrm{ACE}}) =
\sigma^2(1/n_C + 2/n_{T_1} + 1/n_{T_2})$ is minimized at the
allocation $1:\sqrt{2}:1$. In practice the choice of allocation
should also weigh the relative importance of estimating the algorithm
channel and the creative channel and the anticipated effect sizes on
each. These considerations compound the already unfavorable power
economics of advertising experiments \citep{lewis2015unfavorable}.

\subsection{Controlled vs.\ natural direct effect}
\label{app:cde-vs-nde}

A subtlety in the interpretation of the AFE and ACE merits emphasis.
The identification result of Section~\ref{sec:identification} uses the
\emph{marginal} distribution of $S(B)$ when imposing $S$ in arm $T_1$,
not each subject's own counterfactual $S_i(B)$. In the language of
\citet{pearl2001direct} and \citet{vanderweele2015explanation}, the
AFE and ACE are therefore \emph{population-average} effects --- controlled
direct effects marginalized over the distribution of $S(B)$ --- rather
than unit-level natural effects.

The unit-level analogue of the ACE,
\[
\E\bigl[Y_i(B, S_i(B)) - Y_i(A, S_i(B))\bigr],
\]
would additionally require the cross-world independence assumption
$Y(a, s) \indep S(a') \mid X$ for $a \neq a'$
\allowbreak{\citep{vanderweele2015explanation}}.

For most applied purposes the population-average decomposition is the
policy-relevant object: it answers the counterfactual ``what would the
contrast $A - B$ look like if delivery were equalized at $B$-levels?''
Researchers who require the unit-level decomposition must either
invoke sequential ignorability or use a
sensitivity analysis along the lines of \allowbreak{\citet[chs.~2--3]{vanderweele2015explanation}}
to bound the resulting bias.

\subsection{Inference: cluster-robust standard errors, wild cluster bootstrap,
and Romano-Wolf adjustment}
\label{app:inference-details}

The estimator formula~\eqref{eq:varhat} in Section~\ref{sec:methodology}
is the population-level Neyman variance for a difference of arm means.
It is valid when impressions within an arm are independent and
homoskedastic. In production ad-delivery experiments these assumptions
are typically violated on two counts: the algorithm's daily state
induces within-arm serial structure, and the outcome variance is
heteroskedastic across (age $\times$ gender) cells and across
day-of-week.

We therefore use cluster-robust (CR2) standard errors, computed on
the impression-weighted regression at the disaggregated (arm $\times$ day
$\times$ time-slot $\times$ age $\times$ gender) level, clustered at the
(day $\times$ time-slot) level ($G = 19$) --- the level at which
delivery-auction shocks are shared. The HC1 estimator on the
aggregate-cell regression is reported as a robustness specification in
Appendix~\ref{app:further-empirical}.

\paragraph{Multiple testing.}
Multiple-testing across the three pairwise contrasts (and, in the
disaggregated analyses of Section~\ref{sec:empirical-age-gender},
across the $36$ cell-level contrasts) is controlled by the
Romano-Wolf step-down procedure, implemented via a wild cluster
bootstrap with (day $\times$ time-slot)-level Rademacher weights ($B = 5{,}000$) and
centered $t$-statistics. Implementation details are given in the
replication code.

\end{document}